\def\draft{0}
\def\llncs{0}
\def\anon{0}
\newcommand{\remove}[1]{}
\newcommand{\ignore}[1]{}
\def\ShowAuthNotes{1}
\def\ShowAuthNotes{0}
\definecolor{DarkBlue}{RGB}{0,0,150}
\definecolor{llg}{gray}{0.95}
\definecolor{lg}{gray}{0.85}
\newtheorem{theorem}{Theorem}[section]
\newtheorem{proposition}[theorem]{Proposition}
\newtheorem{definition}[theorem]{Definition}
\newtheorem{lemma}[theorem]{Lemma}
\newtheorem{assumption}{Assumption}
\newtheorem{corollary}[theorem]{Corollary}
\spnewtheorem{prop}{Property}{\bfseries}{\itshape}
\spnewtheorem{fact}{Fact}{\bfseries}{\itshape}
\spnewtheorem{subclaim}{Claim}[theorem]{\bfseries}{\itshape}
\spnewtheorem{tlclaim}[theorem]{Claim}{\bfseries}{\itshape}
\spnewtheorem{assumption}{Assumption}{\bfseries}{\itshape}
\newcommand{\secparam}{\lambda}
\newcommand{\secp}{\secparam}
\def\pfend{\hfill\qedsymbol}
\def\pfend{}
\def\cA{{\cal A}}
\def\cB{{\cal B}}
\def\cC{{\cal C}}
\def\cD{{\cal D}}
\def\cR{{\cal R}}
\def\cS{{\cal S}}
\def\cT{{\cal T}}
\def\bbC{{\mathbb C}}
\def\bbE{{\mathbb E}}
\def\bbN{{\mathbb N}}
\def\binset{\{0,1\}}
\def\q2{\lfloor q/2 \rceil}
\newcommand{\abs}[1]{\left\vert {#1} \right\vert}
\newcommand{\norm}[1]{\left\| {#1} \right\|}
\def\poly{{\rm poly}}
\def\bqp{\mathbf{BQP}}
\def\qma{\mathbf{QMA}}
\def\szk{\mathbf{SZK}}
\def\qip{\mathbf{QIP}}
\def\pspace{\mathbf{PSPACE}}
\def\ccp{\mathbf{P}}
\newcommand{\mx}[1]{\mathbf{{#1}}}
\newcommand{\Ex}{\mathop{\bbE}}
\newcommand{\boldpar}[1]{\vspace{3pt}\par\noindent\textbf{#1}}
\newcommand{\authnote}[3]{\textcolor{#3}{[{\footnotesize {\bf #1:} { {#2}}}]}}
\newcommand{\authnote}[3]{}
\newcommand{\absnewline}{\ifnum\llncs=1 \\ \fi}
\def\mgp[#1]{\mx{G}_{#1}}
\def\mgip[#1]{\mx{G}_{#1}^{-1}}
\def\mcit[#1]{\mci[#1]^T}
\def\mci[#1]{\mx{C}_{#1}}
\newcounter{hybridcount}
\newcounter{prevhybridcount}
\newcounter{nexthybridcount}
\def\beginM{\left[\begin{matrix}}
\def\endM{\end{matrix}\right]}
\def\m0{\mx{0}}
\newcommand{\ket}[1]{|{#1}\rangle}
\newcommand{\bra}[1]{\langle{#1}|}
\newcommand{\ketbra}[1]{\ket{{#1}}\bra{{#1}}}
\def\ketz{\ket{0}}
\def\keto{\ket{1}}
\def\ketp{\ket{+}}
\def\ketm{\ket{-}}
\def\ketphi{\ket{\varphi}}
\def\ketpsi{\ket{\psi}}
\newcommand{\tr}{\mathop{\mathrm{Tr}}}
\def\eprtxt{\mathrm{epr}}
\def\epr{\ket{\mathrm{epr}}}
\def\braepr{\bra{\mathrm{epr}}}
\def\kbepr{\ketbra{\mathrm{epr}}}
\def\cnot{\mathrm{CNOT}}
\def\cz{\mathrm{CZ}}
\title{Black-Hole Radiation Decoding is Quantum Cryptography%
\ifnum\llncs=1
\ifnum\anon=0
\thanks{For the most up-to-date version of this work, please refer to \url{https://arxiv.org/abs/2211.05491}.}
\fi
\fi
}
\author{Zvika Brakerski\thanks{%
		Supported by the Israel Science Foundation (Grant No.\ 3426/21), and by the European Union Horizon 2020 Research and Innovation Program via ERC Project REACT (Grant 756482).}
}
\institute{Weizmann Institute of Science, Israel}
\date{}
\author{}
\author{Zvika Brakerski\thanks{Weizmann Institute of Science, Israel, \texttt{zvika.brakerski@weizmann.ac.il}. Supported by the Israel Science Foundation (Grant No.\ 3426/21), and by the European Union Horizon 2020 Research and Innovation Program via ERC Project REACT (Grant 756482).}
}
\date{\today}
\date{}
\renewcommand{\paragraph}{\boldpar}
\begin{document}

\maketitle

\begin{abstract}
We propose to study equivalence relations between phenomena in high-energy physics and the existence of standard cryptographic primitives, and show the first example where such an equivalence holds. A small number of prior works showed that high-energy phenomena \emph{can be explained} by cryptographic hardness. Examples include using the existence of one-way functions to explain the hardness of decoding black-hole Hawking radiation (Harlow and Hayden 2013, Aaronson 2016), and using pseudorandom quantum states to explain the hardness of computing AdS/CFT dictionary (Bouland, Fefferman and Vazirani, 2020).

In this work we show, for the former example of black-hole radiation decoding, that it also \emph{implies} the existence of secure quantum cryptography. In fact, we show an existential equivalence between the hardness of black-hole radiation decoding and a variety of cryptographic primitives, including bit-commitment schemes and oblivious transfer protocols (using quantum communication). This can be viewed (with proper disclaimers, as we discuss) as providing a physical justification for the existence of secure cryptography. We conjecture that such connections may be found in other high-energy physics phenomena.

\end{abstract}

\ifnum\llncs=0

\newpage
\tableofcontents

\newpage
\fi

\def\bhrd{\mathrm{BHRD}}
\def\efi{\mathrm{EFI}}
\def\wtW{\widetilde{W}}

\newcommand{\rdec}{\cA}
\newcommand{\sdec}{\cS}
\newcommand{\distn}{\cD}

\newcommand{\pieq}{\Pi_{\mathrm{EQ}}}

\section{Introduction}

The idea that computational considerations play a significant role in the most foundational laws of nature is gradually becoming well-accepted in physical research. Such ideas are recently coming up increasingly frequently in the context of understanding the plausible properties of the (still out-of-reach) theory of quantum gravity. 

One of the notable examples of this principle, which is the focus of this work, is due to Harlow and Hayden~\cite{HH13} (henceforth HH). They were considering the extent to which ``effective field theory'' (EFT, the combination of quantum field theory and the general theory of relativity which is applicable in ``mild'' gravitational conditions) provides accurate predictions about the physical universe. It is known that EFT is not applicable if gravity is very strong, but until recently it was widely conjectured to otherwise be universally applicable. Recent thought experiments involving black-holes are now challenging this conjecture and putting forth the question of the limits of applicability of EFT.

Whereas the singularity region of the black-hole (its ``center'') has extreme gravity and is not expected to be described by EFT, the event horizon of the black-hole (its ``boundary'') does not have particularly strong gravity. EFT therefore predicts that the near horizon area emits radiation known as Hawking radiation, which in turn is strongly believed to carry out information that fell into the black-hole throughout its history, in some scrambled form.%
\footnote{The reason that Hawking radiation is believed to carry information, is that otherwise the evolution of the wave-function of the universe would not be unitary, which seems implausible in the context of well accepted physical theory.} This means that, at some point, the entropy outside the black-hole must decrease. In other words, at some point, the outgoing Hawking radiation must be correlated with the previously-emitted (``historic'') radiation. State-of-the-art models for black-holes predict that this should happen almost immediately after the halfway point of the evaporation (this is known as ``Page time''), and that shortly after the Page time, outgoing radiation is almost fully quantumly entangled with the history. On the other hand, the general theory of relativity predicts that the near horizon area is not highly impacted by gravity (it is not very ``curved'') and therefore the near-horizon area contains ``ordinary'' vacuum that can be described by ``standard'' quantum field theory. The Hawking radiation therefore originates from the vacuum ground state in the horizon-area. According to quantum field theory, this ground state is strongly spatially entangled. Therefore, the outgoing Hawking radiation must be almost fully entangled with so-called ``modes'' in the near-horizon area. 
If a quantum system (the outgoing radiation) is (almost) fully entangled with another system (the near-horizon modes) then it cannot exhibit (almost) any correlation with another quantum system (the historic radiation). To resolve this difficulty, the ``complementarity principle'' was invoked. That is, it was suggested that if the outgoing radiation is maximally entangled with two different systems, then the two \emph{must actually be the same system}. Indeed, one of these systems (the historic radiation) lives ``outside'' the black-hole, and the other (the near-horizon modes) lives ``inside'' the black-hole. Therefore, perhaps the Hilbert space of the universe simply presents itself so that a part of it is accessible in two different ways, from inside and outside the black-hole. 

This is a very appealing solution, but it has been challenged by the so-called ``firewall paradox'' of Almheiri, Marolf, Polchinski and Sully~\cite{AMPS13}, who proposed the following thought experiment. 
An observer (Alice) can collect the historic radiation, and leisurely decode it. Alice can then verify that that the decoded value is indeed correlated with the outgoing radiation. Finally, Alice can ``jump in'' together with the radiation bit, and check that it is also correlated with the inside modes. Thus Alice witnesses a violation of the monogamy of entanglement, which is a fundamental feature of quantum theory.
The conclusion of \cite{AMPS13} is that the near-horizon area is not in a vacuum ground state, but rather in a very high-energy state that allows for the outgoing radiation to not be entangled with the near-horizon modes. If this is indeed the case, then EFT is inapplicable in the near-horizon area, despite it not having extreme gravitational conditions.

HH proposed to rescue complementarity using a computational argument. They argue that while Alice can in principle decode the historic radiation and distill a correlation with the outgoing radiation, the \emph{computational complexity} of this process may be prohibitive. In particular, the near-horizon modes will not survive long enough until the end of this calculation, so Alice can never witness a monogamy violation. It is therefore suggested that the ability to perform high-complexity tasks falls outside of the descriptive power of EFT, similarly to the setting of a strong gravitational field. See \cite[Section~6]{A16} for additional discussion of the firewall paradox from a computational perspective.

The HH decoding problem considers a quantum state defined over $3$-subsystems $HBR$, where $H$ represents the interior of the black-hole, $B$ is the outgoing radiation which for simplicity is assumed to contain a single bit, and $R$ is the historic radiation. The state is set up so that $BR$ are information-theoretically maximally correlated, that is there exists a \emph{possibly inefficient} quantum process that takes $R$ as input and ``distills'' it into a single qubit register $D$ that is maximally entangled with $B$ (the state of $BD$ is close to $\epr = \frac{\ket{00}+\ket{11}}{\sqrt{2}}$). The HH decoding problem is to efficiently implement such a procedure that takes $R$ and produce $D$ that is highly entangled with $B$.\footnote{The description of the quantum operations in \cite{HH13} and in followups \cite{S13,A16} is in terms of unitaries that are efficient/inefficient to implement. We find this unnecessarily cumbersome and consider general quantum procedures, which correspond to completely positive trace preserving (CPTP) channels. Such channels can always be ``purified'' into unitary form, so this does not limit the generality of the discussion.} They show that this problem is potentially intractable \emph{in the worst case} (i.e.\ there is no efficient process that solves \emph{every possible} $HBR$), unless an implausible complexity-theoretic result follows. Specifically, that all languages that have a statstical zero-knowledge protocol also have a quantum polynomial-time algorithm ($\szk \subseteq \bqp$).

Aaronson~\cite{A16} showed a similar result, but under a different computational assumption. He relied on the existence of injective one-way functions (injOWF) that are secure against quantum inverters. In a nutshell, injOWF is an injective function $f$ such that $f(x)$ can be computed in time $\poly(|x|)$, but $f$ cannot be inverted in polynomial time. Aaronson suggests that this subsumes the result of HH, since $\szk$-hardness implies the existence of OWF \cite{O91}. We believe that this claim is true in spirit but formally it is not completely accurate.\footnote{This is not the focus of this work, so we choose not to elaborate too much on this point. In a nutshell, there are two gaps in the claim of \cite{A16}. One is that \cite{O91} refers to \emph{average-case hardness} in $\szk$, whereas the HH argument only assumes \emph{worst-case hardness}. To an extent, HH rely on a weaker assumption but also prove a weaker claim: they only show that there does not exist an efficient decoder that decodes \emph{all} possible black-holes, whereas Aaronson shows that there is a (single) potential black-hole that resists all efficient decoders. The second point is that \cite{O91} shows that average-case $\szk$ hardness implies one-wayness, but not necessarily injectivity.}

Aaronson further suggests that the existence of injOWF implies a stronger sense of decoding-intractability, as follows. HH and Aaronson's first result presented a system in which $BD$ cannot be brought to the EPR state, but it was still trivial to infer classical correlations between $B$ and $R$. Namely to produce a $D$ that is classically correlated with $B$. In Aaronson's second construction, classical correlations are shown to be intractable to detect. This is done using Goldreich and Levin's hard-core-predicate construction \cite{GL89}. We point out that, whereas finding classical correlation becomes intractable in this latter construction, it becomes trivial to find \emph{phase correlations}, namely, correlations between $BD$ when both registers are presented in the Hadamard basis. In this sense, Aaronson's second construction does not actually give a qualitative improvement over the first one. Indeed, it appears that prior works were not very concerned with the possibility that some correlation between $B$ and $R$ may be efficiency detectable, so long as recovering perfect quantum correlation was intractable. %

\subsection{This Work: Gravitational Cryptology}

The arguments of \cite{HH13,A16} show that under computational/cryptographic assumptions, it is possible to resolve an quantum-gravity ``paradox''. That is, in our terminology, a cryptology-induced gravitational phenomenon. In this work, we improve this connection, by showing that \emph{even milder} cryptographic constructs, namely ones that are not known to imply OWF at all (whether injective or not) already give rise to potential ``black-hole radiation'' whose decoding is intractable. We show that these extremely mild assumptions also allow us to construct ``radiation'' which is not only intractable to decode, but one that is \emph{computationally indistinguishable from being completely uncorrelated} with the historic radiation $R$. Therefore, it is impossible for any efficient process to extract \emph{any} correlation between $B$ and $R$.

We then turn our attention to show the complementary result. Namely, to show that if there exists a black-hole with hard-to-decode radiation, then such a physical phenomena would imply the existence of cryptographic objects! We believe that such a result has not been previously shown in the quantum context. To some extent, one can view this as providing some physical justification to the existence of cryptography.\footnote{At the same time, we caution from over-interpreting our result. The hardness of radiation decoding has been suggested as a possible solution to an apparent paradox, under a specific set of assumptions about the behavior of quantum gravity and black-holes. Therefore our result is only meaningful in the context of these physical assumptions. Nevertheless, the argument that firewalls should be avoided rests on physical understanding of the universe, and being able to derive computational conclusions from it appears worthwhile.
See additional discussion in Section~\ref{sec:intro:discussion} below.
}

Putting the two results together, we get an \emph{existential equivalence} between the hardness of decoding Hawking radiation, using a proper asymptotic formalism that we propose (see Technical Overview below for additional details), and the ability to securely realize a variety of cryptographic objects. This variety includes: bit-commitment schemes (which allow a party to fix a binary value without revealing it to another party, but such that the other party can later verify what that fixed value was), oblivious transfer protocols (a protocol where in the beginning one party holds two bits $a_0, a_1$, and the other holds a bit $b$, and in the end the latter party holds $a_b$, without learning $a_{\bar{b}}$ and without the first party learning $b$ itself), secure multiparty computation protocols (where parties can jointly and securely compute arbitrary functions on their inputs), and (non-trivial) quantum zero-knowledge proof protocols for languages in the complexity class $\qip$. In fact, we rely on a sequence of works, summarized in \cite{BCQ22}, that shows that all of the above cryptographic primitives are existentially equivalent to a very basic object that is called $\efi$ in \cite{BCQ22}. $\efi$ is simply a pair of distributions (over quantum states) that are statistically distinguishable but computationally indistinguishable.\footnote{$\efi$ stands for Efficient (to sample), Far (statistically), and Indistinguishable (computationally). For a formal definition, see Definition~\ref{def:efi}.} We show that this cryptographic object corresponds to hard-to-decode Hawking radiation.

Having shown an equivalence between finding a non-trivial correlation between $B$ and $R$ is equivalent to the existence of a cryptographic primitive, we go back to consider the implications on the \cite{AMPS13} experiment. We notice that in the experiment, Alice processes $R$, compares it with $B$ and then jumps into the black-hole. Whereas any non-trivial correlation between $BR$ would suggest that something strange is going on, Alice's level of confidence may be quite low, and in particular may not be worth jumping into the black-hole to find out the answer. It would be quite unsatisfactory if cryptology is only implied by extremely-hard-to-decode ``black-holes''.\footnote{Alice can boost her confidence by performing the experiment many times and relying on concentration bounds. This would require her to chase after multiple radiation particles once inside the black-hole. This is not unreasonable but we view it as less elegant compared to our solution.} We therefore show that even ``mildly hard to decode'' black-holes imply $\efi$. Therefore, by our prior result, that mildly-hard-to-decode black-holes imply potential extremely-hard-to-decode black-holes. Our notion of mild-hardness is one where Alice attempts to decode $R$, and declares whether she succeeded or not. A \emph{high-confidence decoder} is one that declares success reasonably often (i.e.\ with non-negligible probability), and, conditioned on declaring success, it produces a distilled qubit $D$ such that $BD$ are extremely correlated (the fidelity with an EPR pair is very close to $1$).

\subsection{Technical Overview}

In order to show anything about the Black-Hole Radiation Decoding problem (which we denote by $\bhrd$), we must provide a formal definition. We provide a somewhat more quantitative description compared to the one appearing in prior works \cite{HH13,S13,A16,A22}. Similarly to those prior works, we consider an efficiently generatable quantum state that represents the state of the black-hole interior $H$, the historic radiation $R$ and the next outgoing bit of radiation $B$. We are guaranteed that $BR$ are very strongly correlated, in the sense that there exists an \emph{inefficient} quantum procedure that takes $R$ as input and produces single qubit $D$ such that $BD$ form an EPR pair.\footnote{As noted above, our formulation is in terms of quantum channels and not in terms of unitaries.} In our formulation we allow a little slackness so that $BD$ are allowed to only be very close to an EPR pair (in the sense that their fidelity with the EPR state is negligibly close to $1$). The decoding problem is to efficiently ``distill'' this entanglement. 

But how much entanglement should be hard to distill? In prior works \cite{HH13,A16}, the intractability was shown for getting fidelity $1$ (or close to $1$), and the counterexamples provided in fact exhibited weak non-trivial correlation. We notice that if we wish to completely avoid the firewall paradox, we need to rule out $BD$ having any sort of non-trivial correlation. Therefore, we consider a \emph{radiation decoder} $\rdec$ to be successful if it manages to take $R$ as input, and produce $D$ such that the projection of $BD$ on the EPR state (or rather, the square of its absolute value) is non-negligibly greater than $1/4$. Note that $1/4$ correlation can easily be achieved by outputting $D$ in a maximally mixed state (that is, completely uncorrelated with $B$). However, as we mentioned above, we show an amplification theorem that shows that if there exists a state for which ``high confidence'' decoding is hard then there is also one for which any non-trivial decoding is hard.\footnote{We recall that high-confidence decoding means that with some non-negligible probability, the decoder declares success and produces $D$ such that the fidelity of $BD$ with EPR is $1-o(1)$.}

\paragraph{Cryptology-Induced Gravitational Phenomena.} 
We show that $\efi$ imply a hard-to-decode state as follows. Our hard-to-decode state will start from an EPR pair in registers $BT$. It will then apply $X^x Z^z$ on the $T$ part of the EPR, and use \emph{a commitment scheme} to produce commitments to both $x$ and $z$. The register $R$ will constitute of $T$ and the commitments to $x,z$. We use a cryptographic primitive known as a \emph{non-interactive statistically-binding bit-commitment} which is essentially equivalent to $\efi$ (see e.g.\ \cite{Yan22}). Intuitively, for the purpose of this work, a commitment string is similar to an encryption, in the sense that it computationally hides a well-defined committed value. We can thus think of the radiation decoding task as the task of unmasking $T$. This is information-theoretically possible, since a commitment to a value $x$ allows us to recover $x$ using an inefficient procedure. However, to a computationally bounded adversary, a commitment to $x$ is indistinguishable from a commitment to a fixed value (say $0$). Therefore, a computationally bounded decoder cannot deduce any information about $x,z$ from the commitments, and its view is therefore equivalent to seeing $T$ masked with an unknown random Pauli. This is equivalent to $T$ being maximally mixed, and in particular independent of (i.e.\ in tensor product with) $B$. Therefore a computationally bounded decoder cannot detect \emph{any} correlation between $B$ and $R$ without violating the properties of the commitment and thus of the $\efi$.

We note that, as explained above, in our construction, $B$ and $R$ are computationally indistinguishable from being in a tensor product, and therefore it is impossible to find any correlation between them. Therefore, our result is stronger than prior ones by \cite{HH13,A16}. Indeed, in hindsight, one can view the previous results of \cite{HH13,A16} as having a similar structure, but only masking either in the $X$ or $Z$ ``basis'' but not in both, thus not achieving the strong computational tensor-product property that we achieve, and using stronger computational assumptions.

\paragraph{Superdense Decoding, Gravitationally-Induced Cryptology.} To show the converse result, that $\bhrd$-hardness implies $\efi$, we show an equivalence between the $\bhrd$ problem and another problem that we call ``superdense decoding''. Essentially we observe that distilling $D$ that is jointly EPR with $B$ is \emph{quantitatively} the same task as being able to decode two classical bits that are superdense-encoded in $B$. A more detailed description follows.

We recall that when two parties share an EPR pair on registers $BD$, it is possible to transfer two classical bits $x,z$ from one to the other using a single qubit of communication. This is done by applying a Pauli operation $X^x Z^z$ on $B$ and then sending $B$ to the holder of $D$. One can verify that the $x,z$ values generate a set of orthogonal states (the Bell/EPR basis) on $BD$, and thus $x,z$ can be recovered. We present a formal argument that the task of decoding $D$ from some register $R$ is equivalent to the task of decoding a superdense encoding of $x,z$ over $B$. That is, a superdense decoder $\sdec$ takes two registers $BR$, where $B$ encodes random values $x,z$ using the superdense encoding procedure, and its goal is to produce the classical values $x,z$. The \emph{advantage} of the superdense decoder is the probability that it recovered both $x,z$ correctly.

Showing that a radiation decoder implies a superdense decoder (on the same initial state $HBR$) is fairly straightforward. If $D$ that is in EPR with the original $B$ can be decoded from $R$, then we can apply the regular superdense decoding to recover $x,z$. For the converse, we rely on quantum teleportation. 
Assume we have a ``good'' superdense decoder $\sdec$, and we wish to derive from it a good radiation decoder $\rdec$. The superdense decoder produces a classical output $x,z$ out of a pair of registers $BR$, whereas the radiation decoder is required to produce a quantum register $D$ out of the radiation $R$. To do this, we give $\sdec$ as input the $R$ for which we want to decode the radiation, but we give it a ``fake'' $B$ which is a half-EPR of a freshly generated EPR pair.
We show that a successful superdense decoding corresponds to \emph{teleporting} an implicit qubit $D$ onto the other end of the EPR pair of ``fake $B$''. Therefore, even though the superdense decoder only returns a classical value, we can use this value to obtain the decoded radiation qubit $D$. 

The correspondence between the two problems is \emph{exact}. Namely, the probability of recovering $x,z$ by a superdense decoder is identical to the fidelity of the radiation decoder output with an EPR pair, and this is achieved with roughly the same complexity (in both directions). Thus, the superdense decoding problem provides a convenient proxy for the radiation decoding problem, only now we have a problem with a classical interface (random classical bits $x,z$ go in, get encoded quantumly, sent to a quantum decoder, but the output of this decoder is again a classical value that is compared against $x,z$).

To generate $\efi$ from a superdense-hard state, we consider an instance of the superdense decoding problem in registers $BR$, together with the ``correct answer'' $x,z$. We define two distributions over quantum states. One that outputs $BRxz$ and one that outputs $BRu_1u_2$, where $u_1, u_2$ are random bits that are drawn independently of $x,z$. The two distributions are clearly distinguishable by an inefficient procedure that superdense-decodes $BR$ and compares the outcome with the two classical bits provided. We show that distinguishing the two with any non-negligible probability allows us to ``learn'' about the values of the actual $x,z$ and thus to obtain non-negligible advantage against the superdense decoding problem.

\paragraph{Amplification.} 
We are given a state which is only hard to superdense-decode with high confidence. Namely, it is hard to come up with a decoder that declares ``success'' with non-negligible probability, and upon declaring success, correctly decodes the state with probability $1-o(1)$. We would like to show that even such a hard-to-decode state implies $\efi$. The argument above is clearly insufficient, since it is possible that the two distributions obtained are always distinguishable to within a non-negligible advantage.

Our amplification result is achieved using repetition. We first show that if we get a sequence of decoding instances, and are tasked with correctly decoding \emph{all of them}, then we cannot succeed with any non-negligible probability. We then use a hard-core bit construction to show a pair of $\efi$ distributions.

For the first part, we consider a challenge that consists of an $n$-tuple of independent decoding instances that we call ``slots''. We consider the task of correctly decoding all slots with non-negligible probability. 
We show that given such a decoder, that succeeds with non-negligible probability $\epsilon$, it is possible to identify a slot $i$ and an event $E_i$, such that $E_i$ occurs with probability at least $\epsilon$, and when $E_i$ occurs, the probability that the decoding of the $i$-th slot succeeds is roughly $1-1/n$. The proof of this ``Estimation Lemma'' uses elementary probability theoretic techniques. Specifically, $E_i$ is simply the event that all slots $<i$ were decoded correctly. We show that at least for some of the $i$'s, if all previous slots were correct, then the $i$-th slot will also decode correctly.

This establishes the hardness of of decoding all $n$ slots with non-negligible probability. The $\efi$ construction follows by considering an instance that contains $n$ slots as before, but also the inner product of the concatenation of all correct answers with a random vector $a$, in addition to $a$ itself. This is exactly the Goldreich-Levin hard core bit \cite{GL89}. We set one of the $\efi$ distributions to have the correct inner product value, and the other to have its negation. This means that a distinguisher between the two amounts to calculating the value of the inner product.
We then use a result by Adcock and Cleve \cite{AdcockC02} that shows a quantum algorithm that converts such an inner-product predictor into one that produces the entire pre-image of the inner product. This completes the amplification algorithm.

\subsection{Discussion and Other Related Work}
\label{sec:intro:discussion}

Our work suggests that, to some extent, the existence of cryptographic hardness could rely on a natural phenomena. The desire to find physical substantiation for cryptography is not new. Indeed, one can make a (classical) thermodynamical argument that the increase of entropy in complex systems implies that some processes (like scrambling an egg or breaking a glass) are ``efficiently occurring but intractable to invert''. The argument that is implied by this work follows a similar high level logic, but the quantitative nature of the black-hole radiation decoding problem allows us to draw quantitative conclusions and explicitly point out to the connection between the (proposed) properties of physical theory, and computational hardness.

It is nevertheless important to state as clearly as possible what we need to assume in order to derive the conclusion that cryptographic hardness is ``implied'' by nature.
\begin{enumerate}
	\item We assume that the description of black-hole dynamics is indeed as accepted physical theory suggests. This includes the unitarity of the evolution of black-holes and the applicability of ``known physics'' (effective field theory) in the near-horizon area.
	
	\item We assume that there is some cosmological censorship of the firewall paradox that prevents the \cite{AMPS13} experiment from being successfully carried out ``under normal circumstances''. Namely, given that we live in ``moderate gravity space'' and haven't encountered random firewalls popping into existence, we should not expect any experiment we conduct to bring them up in the specific moderate gravity space in the near-horizon area.
	
	\item We accept the translation of the decoding problem from a physical system into an information theoretic task of decoding a qubit of information. Or at least that a more accurate description does not significantly change the nature of the problem.
	
	\item We state the decoding problem as an average case problem, namely we require a black-hole which is hard to decode against all efficient decoders. We furthermore assume that this hardness needs to be super-polynomial. It may be the case that some concrete hardness would be required, rather than just assuming hardness against all polynomial time decoders. For example, \cite[Footnote 4]{HH13} suggests that exponential hardness of the decoding problem should be sought out. We note that our reductions are quite generic and tight, and asymptitcally preserve the hardness of the decoding problem versus that of the obtained cryptographic primitive. They can therefore be applied to any ``degree of hardness'' of the decoding problem.
\end{enumerate}

Whereas each of these axioms can be challenged, we believe that they are at least consistent with the way science currently views the physical world. Therefore, while we cannot formally deduce cryptographic hardness from nature, we can at least view the above as some justification that cryptography is a natural part of our current view of the universe.

\paragraph{Quantum Cryptography vs.\ Classical Cryptography.} In this work we show that the black-hole radiation decoding problem is equivalent to the existence of quantum-cryptographic objects, namely EFI. However, it remains unsettled whether it has implications on the landscape of \emph{classical} cryptography. Aaronson \cite{A16} raised the question whether radiation decoding remains intractable even in the extreme case where $\ccp$=$\pspace$ (where in particular classical cryptography does not exist). Our work frames this question in the context of the relation between quantum cryptography and classical complexity. It was recently shown by Kretschmer~\cite{Kretschmer21} that (in a quantum-relativized world) quantum cryptography, and in particular EFI, could exist even if $\bqp$=$\qma$ (which is a setting where in particular classical cryptography can always be efficiently quantumly broken). The same work also showed that some forms of quantum cryptography cannot exist if $\ccp$=$\pspace$, but this refers to so-called pseudorandom quantum states which are plausibly stronger primitives than EFI. It remains an intriguing open problem whether EFI could exist even if $\ccp$=$\pspace$, and this work shows the implications to the radiation-decoding problem.

\paragraph{Gravitational Non-Malleability, Black-Hole Dynamics.} 
Kim, Tang and Preskill~\cite{KTP20} were considering the question of whether complementarity between the emitted radiation and the black-hole interior implies that the outside system could effect the black-hole interior by applying operations on the system $R$. They used cryptography, and in particular pseudorandom quantum states \cite{JLS18} in order to show that under the assumption that the emitted radiation is pseudorandom (and thus ``sufficiently scrambled''), it is not possible to act on it \emph{efficiently} in a way that would effect the black-hole interior in a non-negligible manner. This is an additional evidence that the connection between cryptography and gravitational phenomena may not be incidental. Nevertheless, it is not known whether black-hole dynamics would actually emit the radiation in a pseudorandom manner. Indeed, coming up with a satisfactory description of information processing in black-hole dynamics is a grand goal, and our work is another evidence that cryptographic research may play a significant role in pursuing it.

\paragraph{Gravitational Cryptology in AdS/CFT?}
Cryptographic tools have recently been used by Bouland, Fefferman and Vazirani \cite{BFV20} in order to characterize computational properties of the so-called AdS/CFT correspondence \cite{Maldacena98}. At a high level, AdS/CFT is a class of quantum gravity theories that are based on a duality between ``quantum gravity in the universe'' (AdS, this is often called ``the bulk'') and ``quantum fields on the boundary of the universe'' (CFT).\footnote{To be precise, our universe is not AdS, so AdS/CFT theories do not directly apply to universes like our own. Nevertheless, it is believed that AdS/CFT hints to a duality relation that could exist in our universe as well.} It is shown in \cite{BFV20} that quantum gravity theories such as AdS/CFT are either hard to use in the sense that the AdS to CFT translation is computationally intractable, or that they can solve computational problems that are intractable to ``standard'' quantum computers. Their work relied on the conjectured security of an explicit construction of a cryptographic object (a pseudorandom quantum state \cite{JLS18}). Gheorghiu and Hoban \cite{GH20} later proposed a path towards establishing a similar argument based on a well studied cryptographic problem (Learning with Errors \cite{Regev05}).

The idea that the AdS/CFT mapping is potentially cryptographically hard, at least in some instances, could lead to a speculation that gravitational cryptology could be emergent in that context as well. Is it possible that the properties of quantum gravity \emph{necessitate} cryptographic hardness in the AdS/CFT map? Can we think of the CFT description as an encryption of the AdS space? Interestingly, Aaronson (attributing to a discussion with Gottesman and Harlow about a proposal by Susskind) \cite{AGHS22} proposed that such cryptographic hardness would posses \emph{homomorphic} properties. This is since the evolution of the system in the bulk, according to the laws of physics, is represented by the evolution in the CFT, but the ``plaintext'' is intractable to recover. Many parts of this analogy remain unclear and await further investigation.

\section{Preliminaries}

\paragraph{The Security Parameter.} The notion of a security parameter is one of the most fundamental in cryptography and is used to provide an asymptotic notion of efficiency and security, especially in cases where the input lengths do not immediately translate to a complexity measure (e.g.\ algorithms that do not take an input).

All algorithms we consider take an additional implicit input called the \emph{security parameter} and usually denoted by $\secp$. The complexity of the algorithm is defined as a function of $\secp$. Therefore, in fact, every algorithm we consider constitutes of an ensemble of algorithms, one for each value of $\secp$. Efficient algorithms are ones that run in time $\poly(\secp)$ for some polynomial, and a negligible function is one that vanishes faster than any inverse polynomial function in $\secp$.

\paragraph{Quantum States and Registers.} We use standard notation for quantum states and quantum registers. Quantum states are vectors in complex Hilbert space. For the purpose of this work, we consider Hilbert spaces composed of qubits, namely of the form $\bigotimes^n_{i=1} \bbC^2 \cong \bbC^{2^n}$, where $n$ is the number of qubits. We associate Hilbert spaces with ``registers'' that manifest the local property of individual qubits. Registers are denoted by capital letters such as $A,B, \ldots$, and the concatenation of registers corresponds to the tensor product of their respective Hilbert spaces. Every $n$-qubit Hilbert space can naturally be written as a sequence of $n$ single qubit registers $A_1 \cdots A_n$. We can thus refer to the register $A_1 A_3$, which corresponds to the first and third qubits in the sequence. We may sometimes use more than one partitioning for the same Hilbert space, for example if $n=5$ we may consider a $2$-qubit register $B$ and an $3$-qubit register $C$, and define $B = A_1 A_3$, $C = A_2 A_4 A_5$. In this case $BC$ representes a Hilbert space that is isomorphic to $A_1 \cdots A_5$. Likewise $CB$ is isomorphic to $BC$.

We use standard Dirac notation for quantum states. We recall that quantum states are fully characterized by their density matrix and that ``pure quantum states'' are ones represented by a rank-$1$ density matrix. The density matrix of a quantum state that is supported over a number of registers, say $A$,$B$ is denoted $\rho_{AB}$, and the reduced density matrix over register $A$ is denoted $\rho_A$. 
A register is said to contain a classical value if its reduced density matrix is diagonal in the computational basis (see below). We often refer to classical values by denoting them in lowercase letters.

A pure state $\ketpsi$ over $m$ qubits is a \emph{purification} of a (possibly mixed) state $\rho$ over $n \le m$ qubits, if letting $A,B$ be registers of length $n, (m-n)$ respectively, and setting $\rho_{AB}=\ketbra{\psi}$, it holds that $\rho_A = \rho$. Every state has a purification of length $m=2n$.

Every qubit Hilbert space is associated with a basis that we refer to as the ``computational basis'' and denoted $\{\ketz, \keto\}$. We also refer to the ``Hadamard basis'' $\{\ketp, \ketm\}$ which is defined respective to the computational basis by applying the Hadamard unitary operator.

We use standard notation for quantum gates (unitary operators) such as Pauli operators $X,Z$, Hadamard $H$ and controlled $X$/$Z$ operations: $\cnot: \ket{xy} \to \ket{x,x\oplus y}$, $\cz: \ket{xy} \to (-1)^{xy}\ket{x,y}$. When we wish to apply a gate or unitary on a specific register, we denote it in subscript. For example, in a system containing registers $ABCD$, if $B$ is a single qubit register then $H_B$ denotes the operation of applying the Hadamard gate on the register $B$ (and acting as identity on $ACD$).

We let $\epr$ denote the standard two-qubit EPR state $\epr = \frac{\ket{00} + \ket{11}}{\sqrt{2}}$, and also consider the ``EPR/Bell basis'' $\{ \ket{\eprtxt_{xz}} = (I \otimes X^x Z^z) \epr \}_{x,z\in \binset}$. Naturally, $\epr = \ket{\eprtxt_{00}}$. We note that $(H \otimes I) \cnot \ket{\eprtxt_{xz}} = \ket{xz}$.

\paragraph{Quantum Channels.} A quantum channel is a CPTP (completely positive trace preserving) map between density operators, where the input and output can be defined over different Hilbert spaces. We refer to the Hilbert space of the input as the ``source'' and that of the output as the ``target''. Given two registers $S,T$ whose associated Hilbert spaces correspond to those of the source and target spaces of a channel $\cC$, we let $\cC[S\to T]$ denote the application of the channel on the quantum state stored in register $S$, and storing the output in the register $T$. %
We think of the application of a channel as destroying the source register $S$ and generating the target register $T$.

The computational complexity of a quantum channel refers to the minimal quantum circuit size that implements it. The atomic operations in quantum circuits are the application of a unitary on a constant number of qubits (say at most $3$ qubits), introducing additional single-qubit registers that are initialized to the zero state, and discarding single-qubit registers (thus tracing them out of the total state of the system). It follows from the above that generating a random classical bit, and applying constant-arity classical gates can also be performed by a quantum circuit at a constant cost.

An ensemble (indexed by the security parameter) of quantum channels is efficiently implementable if its complexity is bounded by $p(\secp)$ for some polynomial $p$. An ensemble of quantum states is efficiently generatable if there exists an efficient channel that generates them (starting from an empty, or $0$-dimensional, register).

For any quantum channel $\cC$, there exists a unitary $U_{\cC}$ such that $\cC[S\to T]$ is equivalent to the following. Consider registers $A, A'$ such that the dimensions of $SA$ and $TA'$ match, apply the unitary $U_{\cC}$ on $SA=A'T$ and trace out $A'$. Note that this unitary is not unique (although it is possible to define such a unitary canonically). The channel $\cC$ is efficiently implementable if there exists such a $U_{\cC}$ which is efficiently implementable. The unitary $U_{\cC}$ is a \emph{purification} of $\cC$. If a channel (efficiently) generates some quantum state, a purification of the channel (efficiently) generates a purification of the state.

In terms of notation, we may refer to ``a channel from register $A$ to register $B$'' when we mean that the channel is defined over the Hilbert spaces induced by these registers. Such a channel can be applied on other registers that are associated with Hilbert spaces of the same dimension as $A,B$.

\paragraph{Measurements and Distinguishers.} For the purpose of this work, we use the following definition of a measurement. A quantum measurement $\cT$ of a state in a register $S$ is a quantum channel from $S$ to a classical register $t$. A computational basis measurement of an $n$-qubit register corresponds to applying the operator $\sum_{x \in \binset^n} \ketbra{x} \rho_S \ketbra{x}$. More generally measurement in a basis represented by a unitary $U$ corresponds to the operator $\sum_{x \in \binset^n} U \ketbra{x} U^\dagger \rho_S U \ketbra{x} U^\dagger$. Any projector $\Pi$ on the Hilbert space associated with a register $S$ induces a binary (two-outcome) measurement on $S$ by $\ketbra{1}\tr\left[{\Pi \rho_S}\right] + \ketbra{0}\tr\left[{(I-\Pi) \rho_S}\right]$. That is, $\Pr[t=1]=\tr\left[{\Pi \rho_S}\right]$. We refer to such a measurement as ``applying the projector $\Pi$'', and refer to the outcome $1$ as an ``accepting'' event for the measurement. We denote $\pieq = \ketbra{00} + \ketbra{11}$.

A distinguisher between two distributions is a quantum channel whose target is a single classical bit. Such a channel $\cD[S \to t]$ can always be represented by an operator $0 \le {\cD} \le I$ over $S$ (note the slight overloading of notation for both the channel and the operator), s.t.\ $\tr[{\cD} \rho_S]$ is the probability that the channel outputs the value $1$ (when $\rho_S$ is the state of $S$). The \emph{advantage} of a distinguisher in distinguishing two distributions $\rho_0, \rho_1$ is $\tr[{\cD} (\rho_1-\rho_0)] \in [-1, 1]$.\footnote{We use a \emph{signed} definition of advantage, where positive advantage reflects positive correlation between the output of the distinguisher and the ``label'' of the input state, and negative advantage represents negative correlation.} The maximal distinguishing advantage between two distributions is called their \emph{statistical distance} (or total variation distance) and is equal to $\norm{\rho_1 - \rho_0}_1$.

Consider an ensemble of pairs of distributions (indexed by the security parameter $\secp$). We say that two distributions are \emph{statistically indistinguishable} if their statistical distance as a function of $\secp$ is a negligible function. We say that two distributions are \emph{computationally indistinguishable} if every \emph{efficient} distinguisher (a sequence of distinguishers with complexity polynomial in $\secp$), has negligible advantage.

The following propositions are straightforward.
\begin{proposition}\label{prop:indisthybrid}
	Consider two distributions $\rho_0, \rho_1$, and define $\rho'_0 = \tfrac{1}{2} I \otimes \rho_0$ and $\rho'_1 = \tfrac{1}{2} \sum_b \ketbra{b} \otimes \rho_b$. Then if there exists $\cD'$ with advantage $\epsilon$ against $\rho'_0, \rho'_1$, then there exists $\cD$ with advantage $\epsilon/2$ against $\rho_0, \rho_1$ with essentially the same complexity. %
\end{proposition}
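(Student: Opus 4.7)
The plan is to build $\cD$ by a simple prefixing reduction, and the whole argument reduces to one algebraic identity. The starting observation is that $\tfrac{1}{2} I = \tfrac{1}{2}\ketbra{0} + \tfrac{1}{2}\ketbra{1}$, so when we substitute this inside $\rho'_0$ the two states coincide on their ``$\ketz$--branch'' and differ only on their $\keto$--branch: $\rho'_1 - \rho'_0 = \tfrac{1}{2}\, \ketbra{1}\otimes(\rho_1 - \rho_0)$.

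Given this identity, I would define $\cD$ as follows: on input $\sigma$, prepare a fresh single-qubit ancilla in state $\keto$, run $\cD'$ on the joint register whose state is $\ketbra{1}\otimes \sigma$, and output whatever $\cD'$ outputs. Since this reduction only adds the preparation of one ancilla qubit (initializing a fresh qubit to $\ketz$ and applying an $X$), $\cD$ has essentially the same complexity as $\cD'$.

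To verify the advantage, I would use the identity above together with linearity of the trace. By assumption $\tr[\cD'(\rho'_1 - \rho'_0)] = \epsilon$, so $\tr[\cD\,(\rho_1 - \rho_0)] = \tr[\cD'(\ketbra{1}\otimes(\rho_1-\rho_0))] = 2\epsilon$, which is in fact stronger than the advertised $\epsilon/2$. If the signed convention of advantage is a concern one may first flip the classical output bit of $\cD'$ to ensure $\epsilon\ge 0$ without loss of generality.

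The only ``obstacle'' here is really a triviality: one must simply recall from the preliminaries that introducing a fresh single-qubit ancilla and initializing it to $\keto$ is a constant-cost quantum operation, so the complexity preservation is immediate. No hybrid arguments or quantum subtleties are needed beyond the one-line identity on $\rho'_1 - \rho'_0$.
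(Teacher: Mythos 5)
Your proposal is correct and takes essentially the same approach as the paper: the paper's proof also defines $\cD$ by initializing an ancilla register to $\ketbra{1}$ and running $\cD'$ on the joint register, leaving the final step as ``the result follows by calculation.'' Your explicit computation via the identity $\rho'_1 - \rho'_0 = \tfrac{1}{2}\ketbra{1}\otimes(\rho_1-\rho_0)$, showing the advantage is in fact $2\epsilon$ (which is at least the claimed $\epsilon/2$), is consistent with and slightly sharper than the stated bound.
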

\begin{proof}
	Given $\cD'$ as in the statement, define $\cD$ as follows. Given an input in a register $S$, initialize a register $B$ to the state $\ketbra{1}$ and run $\cD'$ on $BS$. The result follows by calculation.
\end{proof}

\begin{proposition}\label{prop:distinguish}
	Consider two distributions $\rho_0, \rho_1$ and let $\cD$ be a distinguisher with advantage $\epsilon$. Then the following experiment succeeds with probability $1/2+\epsilon/2$.
	\begin{enumerate}
		\item Initialize registers $BS$ in state $\tfrac{1}{2} \sum_b \ketbra{b} \otimes \rho_b$.
		
		\item Apply $\cD[S \to D]$.
		
		\item Apply $\pieq$ on $BD$ (i.e.\ output the negation of their XOR in the computational basis).
	\end{enumerate}
\end{proposition}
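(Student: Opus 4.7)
The plan is to track the state through the three steps of the experiment and reduce everything to the scalar quantities $p_b \bydef \tr[{\cD}\rho_b]$, which are simply the probabilities that $\cD$ outputs $1$ on each of the two inputs. By definition of the advantage, $p_1 - p_0 = \epsilon$, and the claim will follow from a two-line arithmetic identity.

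First I would unfold step 1: the joint state of $BS$ is the classical-quantum state $\tfrac{1}{2}\ketbra{0}\otimes\rho_0 + \tfrac{1}{2}\ketbra{1}\otimes\rho_1$, so $B$ is a uniformly random classical bit and, conditioned on $B=b$, the register $S$ holds $\rho_b$. Next I would apply $\cD[S\to D]$ on each branch separately: since $\cD$ has classical output, the resulting $BD$-register is entirely classical, with $\Pr[D=1 \mid B=b] = p_b$ and $\Pr[D=0 \mid B=b] = 1 - p_b$.

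For step 3, I would observe that on a classical $BD$-state the projector $\pieq = \ketbra{00}+\ketbra{11}$ simply measures the event $B=D$, so the probability the experiment accepts is
\[
\Pr[B=D] \;=\; \tfrac{1}{2}\,\Pr[D=0\mid B=0] \,+\, \tfrac{1}{2}\,\Pr[D=1\mid B=1] \;=\; \tfrac{1}{2}(1-p_0) + \tfrac{1}{2}p_1 \;=\; \tfrac{1}{2} + \tfrac{p_1-p_0}{2} \;=\; \tfrac{1}{2} + \tfrac{\epsilon}{2}.
\]

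There is essentially no technical obstacle here; the only thing one has to be a little careful about is the convention that the advantage is signed, so that $p_1 - p_0 = \epsilon$ (rather than $|p_1 - p_0|$), which is exactly the convention fixed earlier in the Preliminaries. With that in place, the computation is immediate, and no further machinery (e.g.\ purification, Stinespring dilation, or coherent analysis of $\cD$) is needed because the output register $D$ is classical by definition of a distinguisher.
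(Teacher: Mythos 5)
Your proposal is correct and matches the paper's own proof: both define $p_b=\tr[\cD\rho_b]$ (the paper calls them $\delta_b$), write down the resulting classical state on $BD$, and compute the acceptance probability as $\tfrac{1}{2}(1-p_0+p_1)=1/2+\epsilon/2$ using the signed advantage convention. No differences worth noting.
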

\begin{proof}
	Let $O$ be the operator associated with $\cD$. Define $\delta_b = \tr[O \rho_b]$. Then the state of $BD$ after the application of $\cD$ is 
	\begin{align}
		\tfrac{1}{2}\left( (1-\delta_0) \ketbra{00} + \delta_0 \ketbra{01} + \delta_1 \ketbra{11} + (1-\delta_1)\ketbra{10} \right)~.
	\end{align}
	The success probability is thus $\tfrac{1}{2} (1-\delta_0 +\delta_1) = 1/2+\epsilon/2$.
\end{proof}

\paragraph{Quantum Teleportation and Superdense Coding.}
One of the most basic features of quantum information is the ability to teleport a quantum state using classical communication. Let $TDBB'$ be a quadruple of quantum registers, where $T,D,B$ are all single-qubit registers, and $B'$ is of an arbitrary dimension. Consider an initial state where $\rho_{BB'} = \rho_0$ is arbitrary and $\rho_{TD} = \kbepr$. Then if we measure $DB$ in the Bell basis, namely apply $(H_B \otimes I_D) \cnot_{BD}$ and measure in the computational basis, and obtain an outcome $x,z$, then it holds that post-measurement: $\rho_{TB'} = ((X^x Z^z)_B \otimes I_{B'}) (\rho_0)_{BB'}$. Namely, given the values of $x,z$ it is possible to recover $T$ to have the exact same state as the original $B$.

A related phenomenon is that of superdense coding, which allows us to send two bits of classical information using just one qubit, assuming that the parties pre-share an EPR pair. Let $BD$ be a register that is initialized to $\epr$. Now to encode two classical bits $x,z$, apply $X^{x}Z^{z}$ on the register $D$, and send it to the holder of $B$. Now once $BD$ are measured in the Bell basis, $x,z$ are recovered.

\paragraph{The Quantum Goldreich-Levin Theorem.} The following is a quantum version of the famous Goldreich-Levin theorem \cite{GL89}. This version was proven by Adcock and Cleve in \cite{AdcockC02}, although it is not stated as a standalone theorem but rather as Eq.~(16) which describes an algorithm presented in Figure~1 and is proven as the first part of the proof of {\cite[Theorem~2]{AdcockC02}}.

\begin{theorem}[{\cite[Eq.~(16)]{AdcockC02}}]\label{thm:qgl}
	Let $n \in \bbN$, $x \in \binset^n$ and let $\{U_r\}_{\binset^n}$ be a set of $m$-qubit unitaries. Define the unitary $U$ that applies $U_r$ controlled by a value $r$ received in an additional register, formally: $U = \sum_{r} \ketbra{r} \otimes U_r$. Then the quantum circuit $\cR^{U, U^\dagger}$ described in Figure~\ref{fig:ACfig1} has the following property.

	If, on average over $r$, the first qubit of $U_r \ket{0^m}$ is $\epsilon$-correlated with $r\cdot x = \sum_{i=1}^n r_i x_i \pmod{2}$, namely
	\begin{align}\label{eq:ippredict}
		\Ex_r \left[ \norm{ (I \otimes \bra{r \cdot x}) U_r\ket{0^m}}^2 \right]  \ge 1/2 + \epsilon~,
	\end{align}
	then it holds that 
	\begin{align}
		\abs{\bra{x, 0^m, 1} \cR^{U, U^\dagger}\ket{0^{n+m+1}}} \ge 2\epsilon~.
	\end{align}
	
\end{theorem}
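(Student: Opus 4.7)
The plan is to use the standard phase-kickback / Fourier-transform construction, in which a single query to $U$ and a single query to $U^\dagger$ are used to coherently write the value $r \cdot x$ into a relative phase and then invert the Hadamard transform on the ``$r$-register'' to concentrate amplitude onto $\ket{x}$. Concretely, $\cR^{U, U^\dagger}$ operates on three registers initialized to $\ket{0^n}\ket{0^m}\ket{0}$, an index register (first $n$ qubits), a scratch register (middle $m$ qubits), and a single-qubit phase ancilla. It (i)~applies Hadamards on the index register, and an $X$ followed by $H$ on the ancilla, preparing the joint state $\tfrac{1}{\sqrt{2^n}}\sum_r \ket{r} \otimes \ket{0^m} \otimes \ket{-}$; (ii)~applies $U$; (iii)~applies a $\cnot$ from the first qubit of the scratch register into the ancilla; (iv)~applies $U^\dagger$; and (v)~applies Hadamards to the index register and to the ancilla.

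The key identity is that a $\cnot$ controlled by a data qubit and targeting a $\ket{-}$ ancilla is equivalent to applying $Z$ on that data qubit while leaving the ancilla in $\ket{-}$. Writing $U_r\ket{0^m} = \alpha_{r,0}\ket{0}\ket{\phi_{r,0}} + \alpha_{r,1}\ket{1}\ket{\phi_{r,1}}$, this means that after step~(iv) the joint state of the index and scratch registers is
\begin{equation*}
	\frac{1}{\sqrt{2^n}}\sum_r \ket{r} \otimes U_r^\dagger Z_1 U_r\ket{0^m}~,
\end{equation*}
tensored with the ancilla still in $\ket{-}$. Step~(v) takes the ancilla from $\ket{-}$ back to $\ket{1}$ and Fourier-transforms the index register, so the amplitude of $\ket{x, 0^m, 1}$ in the output equals
\begin{equation*}
	\frac{1}{2^n}\sum_r (-1)^{r\cdot x} \bra{0^m}U_r^\dagger Z_1 U_r\ket{0^m}
	= \frac{1}{2^n}\sum_r (-1)^{r\cdot x}\bigl(|\alpha_{r,0}|^2 - |\alpha_{r,1}|^2\bigr)~.
\end{equation*}

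Setting $p_r := |\alpha_{r, r\cdot x}|^2$, a direct check in the two cases $r \cdot x \in \binset$ gives $(-1)^{r\cdot x}(|\alpha_{r,0}|^2 - |\alpha_{r,1}|^2) = 2p_r - 1$, so the above amplitude equals $2\,\Ex_r[p_r] - 1$. Hypothesis~\eqref{eq:ippredict} is precisely $\Ex_r[p_r] \ge 1/2 + \epsilon$, yielding amplitude at least $2\epsilon$, as required.

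The only points that genuinely require care are the sign bookkeeping of the phase kickback (making sure the induced $Z$ sits on the correct qubit, so that the $(-1)^{r\cdot x}$ contributed by the final index-register Hadamards and the sign of $|\alpha_{r,0}|^2 - |\alpha_{r,1}|^2$ combine in phase rather than cancel) and the verification that the ancilla remains disentangled throughout steps~(iii)--(v), which is what allows the final Hadamard on the ancilla to deterministically produce $\ket{1}$. Both follow mechanically from the phase-kickback identity, so I would expect no real obstacle beyond careful bookkeeping; in particular, the argument uses a single $U$/$U^\dagger$ pair and no repetition, which is why the resulting bound is the tight ``$2\epsilon$'' rather than an amplified polynomial loss.
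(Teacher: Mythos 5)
Your proof is correct. The paper does not actually prove Theorem~\ref{thm:qgl} --- it is imported by citation from Adcock and Cleve \cite{AdcockC02} --- and your derivation is precisely their original argument: the kicked-back $Z$ on the designated scratch qubit gives output amplitude $\Ex_r\left[(-1)^{r\cdot x}\bra{0^m}U_r^\dagger Z U_r\ket{0^m}\right]=2\Ex_r[p_r]-1\ge 2\epsilon$ on $\ket{x}$, and the ancilla stays in a product state throughout so the overlap with $\ket{x,0^m,1}$ is exactly this quantity. The only cosmetic deviation from Figure~\ref{fig:ACfig1} is that the figure keeps the ancilla in $\ket{1}$ and uses a bare controlled-$Z$, whereas you conjugate a $\cnot$ by Hadamards on a $\ket{-}$ ancilla; these implement the same unitary, so nothing in the bookkeeping changes (one only needs the qubit carrying the correlation in Eq.~\eqref{eq:ippredict} to coincide with the qubit the kicked-back $Z$ acts on, which is a labeling convention rather than a gap).
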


\begin{figure}[h]
	\centering
	\setlength{\unitlength}{0.5mm}
	
	\begin{picture}(50,120)(0,0)
	\end{picture}
	\begin{picture}(165,120)(17,0)
		
		\put(5,88){$\left\{ \mbox{\rule{0mm}{9mm}} \right.$}
		\put(5,43){$\left\{ \mbox{\rule{0mm}{9mm}} \right.$}
		
		\put(-17,88){\makebox(10,4){$n$ qubits}}
		\put(-17,43){\makebox(10,4){$m$ qubits}}
		
		\put(15,15){\line(1,0){10}}
		\put(35,15){\line(1,0){50}}
		\put(95,15){\line(1,0){70}}
		
		\put(15,30){\line(1,0){25}}
		\put(80,30){\line(1,0){20}}
		\put(140,30){\line(1,0){25}}
		
		\put(15,45){\line(1,0){25}}
		\put(80,45){\line(1,0){20}}
		\put(140,45){\line(1,0){25}}
		
		\put(15,60){\line(1,0){25}}
		\put(80,60){\line(1,0){20}}
		\put(140,60){\line(1,0){25}}
		
		\put(15,75){\line(1,0){10}}
		\put(35,75){\line(1,0){5}}
		\put(80,75){\line(1,0){20}}
		\put(140,75){\line(1,0){5}}
		\put(155,75){\line(1,0){10}}
		
		\put(15,90){\line(1,0){10}}
		\put(35,90){\line(1,0){5}}
		\put(80,90){\line(1,0){20}}
		\put(140,90){\line(1,0){5}}
		\put(155,90){\line(1,0){10}}
		
		\put(15,105){\line(1,0){10}}
		\put(35,105){\line(1,0){5}}
		\put(80,105){\line(1,0){20}}
		\put(140,105){\line(1,0){5}}
		\put(155,105){\line(1,0){10}}
		
		\put(25,10){\framebox(10,10){$X$}}
		\put(25,70){\framebox(10,10){$H$}}
		\put(25,85){\framebox(10,10){$H$}}
		\put(25,100){\framebox(10,10){$H$}}
		\put(145,70){\framebox(10,10){$H$}}
		\put(145,85){\framebox(10,10){$H$}}
		\put(145,100){\framebox(10,10){$H$}}
		
		\put(40,25){\framebox(40,85){$U$}}
		\put(100,25){\framebox(40,85){$U^\dagger$}}
		
		\put(85,10){\framebox(10,10){$Z$}}
		\put(90,20){\line(0,1){10}}
		
		\put(90,30){\circle*{3}}
		
	\end{picture}
	\caption{\small Quantum circuit $\cR^{U, U^\dagger}$.}
	\label{fig:ACfig1}
\end{figure}

The following immediate corollary follows from examining the algorithm in Figure~\ref{fig:ACfig1}.
\begin{corollary}\label{cor:qgl}
	In the setting of Theorem~\ref{thm:qgl}, if there exists a pure state $\ketphi$ such that
	\begin{align}\label{eq:absepsilon}
		\abs{ \Ex_r \left[ \norm{(I \otimes \bra{r \cdot x}) U_r\ketphi }^2 \right] -1/2}  \ge \epsilon~,
	\end{align}
	then it holds that 
	\begin{align}
		\norm{(\bra{x}\otimes I) \cR^{U, U^\dagger}\ket{0^{n}} \ketphi \ketz}^2 \ge (2\epsilon)^2~,
	\end{align}
	and therefore it is possible to recover the value of $x$ with probability at least $(2\epsilon)^2$ by measuring the first $n$ qubits in the computational basis.
	
	Furthermore, if $U_r$ acts as identity on a part of $\ketphi$, then $\cR$ does not require access to this part of its input. In other words, the above holds even in the presence of quantum auxiliary input.
	\end{corollary}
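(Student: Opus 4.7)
The plan is to follow the circuit $\cR^{U, U^\dagger}$ of Figure~\ref{fig:ACfig1} step by step, with $\ketphi$ in place of $\ket{0^m}$ as the input to the middle register, and to identify the amplitude on $\bra{x}$ (on the top register) with a simple expression in $\Ex_r[p_r]$, where $p_r := \|(I \otimes \bra{r\cdot x}) U_r \ketphi\|^2$. The corollary will then follow from the hypothesis $|\Ex_r[p_r] - 1/2| \ge \epsilon$ via a short Cauchy--Schwarz estimate.

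First I would propagate the state through the circuit. After the initial Hadamards on the top register and the $X$ that sets the bottom ancilla to $\ket{1}$, then $U$, the controlled-$Z$ between the marked qubit of the middle register and the ancilla, and finally $U^\dagger$, the state becomes $\frac{1}{\sqrt{2^n}} \sum_r \ket{r} \otimes U_r^\dagger Z^{\star} U_r \ketphi \otimes \ket{1}$, where $Z^{\star}$ denotes $Z$ on the marked qubit of the middle register (this is the phase-kickback identity, valid because the ancilla sits in $\ket{1}$). The final Hadamards on the top register, followed by projecting onto $\bra{x}$ there, leave the unnormalized vector $\ket{\xi} \otimes \ket{1}$ with
\begin{align*}
\ket{\xi} \;:=\; \frac{1}{2^n} \sum_r (-1)^{r\cdot x} \, U_r^\dagger Z^{\star} U_r \ketphi,
\end{align*}
so the goal reduces to lower-bounding $\|\ket{\xi}\|^2$.

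The key step is then the evaluation of $\bra{\phi}\ket{\xi}$. Decomposing $U_r \ketphi$ according to the value of its marked qubit shows that $\bra{\phi} U_r^\dagger Z^{\star} U_r \ketphi$ equals the signed difference of the two marginal probabilities on the marked qubit; a split on $r\cdot x \in \{0,1\}$ then identifies this quantity with $(-1)^{r\cdot x}(2p_r - 1)$. The two factors of $(-1)^{r\cdot x}$ therefore cancel, and $\bra{\phi}\ket{\xi} = 2\Ex_r[p_r] - 1$. Since $\ketphi$ is a unit vector, Cauchy--Schwarz gives $\|\ket{\xi}\| \ge |\bra{\phi}\ket{\xi}|$; combined with the hypothesis, this yields $\|\ket{\xi}\|^2 \ge (2\epsilon)^2$, so a computational-basis measurement on the top register recovers $x$ with at least this probability.

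For the ``furthermore'' clause, I would observe that $\cR$ acts on its $m$-qubit input only through $U$ and $U^\dagger$; if $U_r$ is the identity on a sub-register of $\ketphi$, that sub-register passes through the circuit untouched and can equivalently be held by an external party, and the analysis above applies verbatim with the identity tensor factor stripped off. The only genuinely delicate point of the proof is the sign bookkeeping in the phase-kickback step: the absolute-value form of the hypothesis is needed precisely because $\Ex_r[p_r] - 1/2$ may be negative, but the Cauchy--Schwarz bound absorbs this sign automatically, so no case analysis on the sign is required.
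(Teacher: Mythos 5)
Your proof is correct, but it takes a genuinely different route from the paper. The paper treats Theorem~\ref{thm:qgl} as a black box: it introduces a (possibly inefficient) wrapper unitary $W$ with $W\ket{0^m}=\ketphi$, applies the theorem to $U'=U(I\otimes W)$, and argues that the trailing $W^\dagger$ does not affect the measurement of the first $n$ qubits; it then needs a separate case analysis for the sign in Eq.~\eqref{eq:absepsilon}, handled by flipping the output bit of $U_r$ and observing that this only inserts an extra $Z$ after the controlled-$Z$. You instead re-derive the Adcock--Cleve computation from scratch with $\ketphi$ in place of $\ket{0^m}$: propagating through the circuit via phase kickback gives the post-projection vector $\ket{\xi}=\tfrac{1}{2^n}\sum_r(-1)^{r\cdot x}U_r^\dagger Z^\star U_r\ketphi$, and the identity $\bra{\phi}U_r^\dagger Z^\star U_r\ketphi=(-1)^{r\cdot x}(2p_r-1)$ makes the phases cancel so that $\langle\phi|\xi\rangle=2\Ex_r[p_r]-1$, after which Cauchy--Schwarz finishes the argument. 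Your computation checks out (including the normalization and the claim that the ancilla stays in $\ket{1}$), and it buys two things: the absolute value in the hypothesis is absorbed automatically with no case split, and the corollary becomes self-contained rather than dependent on the exact statement of Eq.~(16) of \cite{AdcockC02}. The cost is that you are effectively reproving the theorem rather than citing it, so your argument is only as good as your reading of Figure~\ref{fig:ACfig1}; the paper's reduction is more robust to the precise internal structure of $\cR$. Your handling of the ``furthermore'' clause matches the paper's.
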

\begin{proof}
	Let us start by considering the case where 
	\begin{align}
		\Ex_r \left[ \norm{ (I \otimes \bra{r \cdot x}) U_r\ketphi }^2\right]   \ge  1/2 + \epsilon~.
	\end{align}
	Notice that in the corollary we only argue about the outcome of the measurement of the first $n$ qubits of the state in the computational basis, and not about the entire output state. For this purpose, replacing the $m$-qubit state $\ket{0^m}$ with $\ketphi$ does not matter since we can imagine a (possibly inefficient) unitary $W$ that generates $\ketphi$ from $\ket{0}$, then we can apply Theorem~\ref{thm:qgl} with a unitary $U'$ which first applies $W$ to create $\ketphi$ and then applies $U$. This would be identical to applying $U$ with the state $\ketphi$ as input. For applying $U'^\dagger$, we would need to apply $U^\dagger$ followed by $W^\dagger$, but the latter does not affect the outcome of the measurement of the first $n$ qubits and can therefore be omitted. We have that we can use the circuit in Figure~\ref{fig:ACfig1} with our $U$ and with $\ketphi$ as the $m$-qubit input. Furthermore, if $U$ itself acts as identity on some part of $\ketphi$, then $U^\dagger$ acts as identity on it as well, and it does not effect the measured outcome. We therefore get
	\begin{align*}
		\norm{(\bra{x}\otimes I) \cR^{U, U^\dagger}\ket{0^{n}} \ketphi \ketz}^2 &= \norm{(\bra{x}\otimes I) \cR^{U', U'^\dagger}\ket{0^{n+m+1}} }^2\\
		& \ge \abs{\bra{x, 0^m, 1} \cR^{U', U'^\dagger}\ket{0^{n+m+1}}}^2 \\
		& \ge (2\epsilon)^2~.
	\end{align*}
	
Finally, we can introduce an absolute value into Eq.~\eqref{eq:absepsilon} by considering the case where 
	\begin{align}
	\Ex_r \left[ \norm{ (I \otimes \bra{r \cdot x}) U_r\ketphi }^2\right]   \le 1/2 - \epsilon~.
	\end{align}
Here, can just consider $U'_r$ that always flips the last bit of the output. For $U'$ we have the ``correct'' expected value of at least $1/2 - \epsilon$, so we can apply Theorem~\ref{thm:qgl}. However, flipping the last bit in $U'$ is equivalent to conjugating the controlled-$Z$ gate in Figure~\ref{fig:ACfig1} by $X$ on the control qubit. This is equivalent to applying an additional $Z$ gate on the last qubit right after the controlled $Z$ operation, which again does not change the measured value of the first $n$ qubits. Therefore, applying the algorithm on $U$ would yield the same outcome as applying it on $U'$ and the corollary follows.
\end{proof}

\paragraph{An Estimation Lemma.} We use the following lemma. The proof is straightforward and is deferred to Appendix~\ref{apx:proofs}.

\begin{lemma}\label{lem:est}
		Let $n \in \bbN$ and let $(\Gamma_1, \ldots, \Gamma_n)$ be an arbitrary set of events over some probability space. Define $\Upsilon_i = \bigwedge_{j=1}^i \Gamma_j$ and let $\Upsilon_0$ be the universal event. Denote $\epsilon = \Pr[\Upsilon_n]$. Finally denote $\alpha_i = \Pr[\Gamma_i | \Upsilon_{i-1}]$.
	Then for all $\delta \in (0,1)$
	\begin{align}
		\Pr_{i \in [n]}\left[\alpha_i  > 1- \frac{\ln(1/\epsilon)}{\delta n}\right] \ge 1- \delta~.
	\end{align}
\end{lemma}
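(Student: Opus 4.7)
The plan is to prove the lemma by a direct chain-rule calculation followed by a Markov-style counting argument. The starting observation is that, by the definition of conditional probability, $\Pr[\Upsilon_i] = \alpha_i \cdot \Pr[\Upsilon_{i-1}]$, so telescoping yields
\begin{align}
\epsilon = \Pr[\Upsilon_n] = \prod_{i=1}^n \alpha_i.
\end{align}
Taking the natural log of the reciprocal, $\ln(1/\epsilon) = \sum_{i=1}^n \ln(1/\alpha_i)$. This identity converts a multiplicative statement about the $\alpha_i$ into an additive budget of total size $\ln(1/\epsilon)$.

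Next, I would call an index $i$ \emph{bad} if $\alpha_i \le 1 - \frac{\ln(1/\epsilon)}{\delta n}$, and use the elementary inequality $\ln(1/x) \ge 1-x$ for $x \in (0,1]$ (equivalently $-\ln(1-y) \ge y$ for $y \in [0,1)$) to conclude that every bad index contributes at least $\frac{\ln(1/\epsilon)}{\delta n}$ to the sum $\sum_i \ln(1/\alpha_i)$. If there were strictly more than $\delta n$ bad indices, the total would exceed $\ln(1/\epsilon)$, contradicting the identity above. Hence at most $\delta n$ indices are bad, so the fraction of good indices (those with $\alpha_i > 1 - \frac{\ln(1/\epsilon)}{\delta n}$) is at least $1-\delta$, which is precisely the stated probability bound when $i$ is drawn uniformly from $[n]$.

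The argument is essentially a one-line Markov inequality in disguise, so there is no real obstacle. The only small subtlety worth being careful about is the edge case where some $\alpha_i = 0$: this would force $\epsilon = 0$ and make $\ln(1/\epsilon)$ undefined (or $+\infty$), but in that degenerate regime the claimed bound is either vacuous or trivially interpretable, and I would briefly note this (or implicitly assume $\epsilon > 0$, since otherwise the lemma has no content for the application). Apart from that, the proof is a short two-step computation and I expect no further difficulties.
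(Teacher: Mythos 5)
Your proof is correct and takes essentially the same route as the paper's: the telescoping product gives $\ln(1/\epsilon)=\sum_{i=1}^n \ln(1/\alpha_i)$, and then Markov's inequality (which you phrase equivalently as a direct count of ``bad'' indices) combined with $\ln(1/x)\ge 1-x$ --- used in the paper in the contrapositive form $e^{-y}\ge 1-y$ --- yields the bound. The degenerate cases you flag ($\epsilon=0$, and likewise $\epsilon=1$, where the strict inequality in the statement becomes vacuous) affect the paper's argument in exactly the same way, so no further work is needed.
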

If particular for any value of $t$, setting $\delta=1/2$, $n > 2 t \ln(1/\epsilon)$, we get that there exists a value of $i$ (in fact, for half the values of $i$), $\alpha_i > 1-1/t$.

\subsection{EFI Quantum States}

The notion of efficiently generatable, statistically far but computationally indistinguishable pair of quantum states (EFI) was recently explicitly introduced in \cite{BCQ22}, extending ideas from \cite{Yan22}, as a generalization of a similar notion that existed in classical cryptography \cite{G90}.

\begin{definition}[EFI]\label{def:efi}
		An Efficiently Sampleable, Statistically Far and Computationally Indistinguishable pair of distributions (EFI) is an ensemble of equal-dimension pairs of  quantum states such that the following all hold:
		\begin{itemize}
			\item[(E)] Each element in the pair, as an individual ensemble, is efficiently generatable.
			\item[(F)] The two ensembles are not statistically indistinguishable.
			\item[(I)] The two ensembles are computationally indistinguishable.
		\end{itemize}
We also consider the notion of ``strong EFI'' for which condition (F) is strengthened as
\begin{itemize}
	\item[(F!)] The two ensembles have statistical distance $1-\eta$, where $\eta$ is a negligible function.
\end{itemize}

\noindent We may strengthen this notion even further and define ``perfect EFI'' via
\begin{itemize}
	\item[(F!!)] The two ensembles have statistical distance $1$.
\end{itemize}
\end{definition}

We note that by using repetition, strong EFI exists if and only if ``standard'' EFI exists (however, this equivalence is not known for perfect EFI).
\begin{proposition}\label{prop:efi}
	Strong EFI exists if and only if EFI exists.
\end{proposition}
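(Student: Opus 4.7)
The ``only if'' direction is immediate, since statistical distance $1-\eta$ for negligible $\eta$ is in particular non-negligible. For the converse, let $(\rho_0, \rho_1)$ be an EFI pair (implicitly indexed by $\secp$); my plan is to exhibit a strong EFI via parallel repetition, taking $\sigma_b := \rho_b^{\otimes n(\secp)}$ for a polynomial $n$ to be fixed below. Efficient samplability of $\sigma_b$ is immediate, since I can just run the efficient sampler for $\rho_b$ a polynomial number of times.

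For the statistical gap, condition (F) guarantees a polynomial $p$ such that $\|\rho_0 - \rho_1\|_1 \ge 1/p(\secp)$ for infinitely many $\secp$. For each such $\secp$, the Fuchs--van de Graaf inequality gives fidelity $F(\rho_0, \rho_1) \le \sqrt{1 - 1/p(\secp)^2}$, and by multiplicativity of fidelity under tensor products, $F(\sigma_0, \sigma_1) \le (1 - 1/p(\secp)^2)^{n/2} \le \exp(-n/(2 p(\secp)^2))$. Choosing $n(\secp) := \secp \cdot p(\secp)^2$ makes this negligible in $\secp$, and a second application of Fuchs--van de Graaf in the converse direction then yields $\|\sigma_0 - \sigma_1\|_1 \ge 1 - F(\sigma_0, \sigma_1) = 1 - \negl(\secp)$ on this same subsequence of $\secp$, matching condition (F!).

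Computational indistinguishability will follow from a standard hybrid argument. Suppose an efficient $\cD$ achieves advantage $\delta(\secp)$ against $(\sigma_0, \sigma_1)$. Define the $(n+1)$ hybrids $\tau_i := \rho_1^{\otimes i} \otimes \rho_0^{\otimes (n-i)}$, so that $\tau_0 = \sigma_0$ and $\tau_n = \sigma_1$; by telescoping there is an index $i^{\ast} \in \{0, \ldots, n-1\}$ at which $\cD$ distinguishes $\tau_{i^{\ast}}$ from $\tau_{i^{\ast}+1}$ with advantage at least $\delta(\secp)/n$. I then build a distinguisher $\cD'$ against the original pair by picking $i^{\ast}$ uniformly at random, sampling the remaining $n-1$ slots using the efficient samplers for $\rho_0$ and $\rho_1$ according to the chosen hybrid, inserting the external challenge into slot $i^{\ast}+1$, and returning $\cD$'s output. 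The resulting advantage is at least $\delta(\secp)/n^2$, which must be negligible by condition (I) on $(\rho_0, \rho_1)$; since $n$ is polynomial, $\delta(\secp)$ itself is negligible, as required. The only (mild) obstacle I anticipate is bookkeeping: the statistical amplification improves the gap only on the subsequence of $\secp$ on which the original gap was at least $1/p(\secp)$, but this is exactly what the strong EFI definition demands.
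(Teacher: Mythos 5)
Your proposal is correct and follows exactly the route the paper intends: the paper gives no explicit proof of this proposition, remarking only that it holds ``by using repetition,'' and your parallel-repetition argument (fidelity multiplicativity plus Fuchs--van de Graaf for the statistical gap, a standard hybrid for computational indistinguishability) is the natural fleshing-out of that remark. The only caveat is the infinitely-often versus almost-everywhere bookkeeping you already flag at the end, which is an artifact of how conditions (F) and (F!) are read rather than a flaw in your argument.
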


We may now make the cryptographic assumption that EFI exist.
\begin{assumption}
	\label{a:efi}
EFI exist.
\end{assumption}

It has been shown in \cite{BCQ22}, also relying on prior work, that the existence of EFI is equivalent to that of a few cryptographic objects.
\begin{theorem}%
	EFI exists if and only if any of the following cryptographic primitives exist:\footnote{This refers to the standard adversarial cryptographic model, without making physical assumptions about limits on storage or use of special hardware.}
	\begin{enumerate}
		\item Quantum bit commitments.
		\item Quantum oblivious transfer protocols.
		\item Quantum secure multiparty computation protocols for non-trivial functionalities.
		\item Non-trivial quantum computational zero-knowledge proofs for the complexity class $\mathrm{QIP}=\mathrm{PSPACE}$.
	\end{enumerate}
\end{theorem}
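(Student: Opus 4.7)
The plan is to establish the four equivalences as a cycle of implications, most of which come directly from the prior literature that the theorem explicitly invokes, plus observing that each of the listed primitives trivially yields an EFI pair in its own security definition. Concretely, I would organize the argument as a round-trip through the primitives in the order: EFI $\Rightarrow$ quantum bit commitments $\Rightarrow$ quantum OT $\Rightarrow$ (quantum MPC and non-trivial QCZK for $\qip$) $\Rightarrow$ EFI. Because each individual arrow is already a theorem in the literature, the new content is primarily in stitching these results together and in arguing the trivial ``back'' direction from each primitive.

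First, for EFI $\Rightarrow$ non-interactive quantum bit commitments, I would invoke Proposition~\ref{prop:efi} to upgrade a given EFI pair to a strong EFI pair $(\rho_0,\rho_1)$, and then apply Yan's construction \cite{Yan22}: the commitment to a bit $b$ is the (purification of) the second register of $\rho_b$, with the first register retained by the committer as the opening. Statistical farness of $\rho_0,\rho_1$ yields (honest) binding, and computational indistinguishability yields hiding; these are precisely the conditions formalized in the ``canonical form'' for quantum commitments. Next, I would use the recent results of Bartusek--Coladangelo--Khurana--Ma~\cite{BCKM21} and Grilo--Lin--Song--Vaikuntanathan~\cite{GLSV21}, which construct quantum two-party OT from any (canonical-form) quantum bit commitment, to get OT. From quantum OT, a standard GMW-style compiler lifts to quantum secure MPC for arbitrary polynomial-time quantum functionalities, and plugging OT into a Kilian-style construction (or routing through MPC-in-the-head) yields non-trivial computational zero-knowledge proofs for all of $\qip = \pspace$.

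For the reverse direction, I would observe that each of the listed primitives yields an EFI pair essentially by inspection of its security definition. From a quantum bit commitment, the two ensembles are simply the honestly generated commitment to $0$ and to $1$: farness follows from binding (an unbounded receiver can distinguish by extracting the committed bit) and indistinguishability follows from hiding. From quantum OT, one similarly takes the sender's message when the receiver's input is $b$ and varies the sender's inputs to make the two distributions statistically far while computationally indistinguishable; the sender's security gives hiding and the receiver's input determines which of the sender bits is recoverable inefficiently. The MPC case reduces to OT (since OT is implied by any non-trivial MPC functionality~\cite{Kilian88}), and the non-trivial QCZK case reduces to EFI by the standard observation (see \cite{BCQ22}) that a computational zero-knowledge proof for a language outside $\bqp$ yields an EFI pair by taking the simulator's output on a YES instance versus on a NO instance.

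The main obstacle is the commitments-to-OT step, which is far from elementary and relies on the rather delicate quantum rewinding and extraction machinery developed in \cite{BCKM21,GLSV21}. I would not reprove it; the proposal is to invoke it as a black box. A secondary subtlety is that the reverse directions from commitments and OT require care about which security notion is assumed (e.g.\ computational vs.\ statistical binding/hiding flavor), since the symmetric case ``sum-binding + hiding'' is the one compatible with EFI; here I would rely on the equivalences among canonical quantum commitment notions established in \cite{Yan22} and surveyed in \cite{BCQ22}, so that the whole cycle closes cleanly.
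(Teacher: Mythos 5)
The paper offers no proof of this statement at all: it is imported verbatim from \cite{BCQ22} (cited in the sentence immediately preceding the theorem), so there is no in-paper argument to compare yours against. Your outline is a faithful reconstruction of how that work and its predecessors establish the equivalence --- EFI to canonical quantum commitments via \cite{Yan22}, commitments to OT to MPC and to zero-knowledge for $\qip$ via the quoted literature, and the easy ``each primitive yields an EFI pair'' reverse directions --- so it matches the intended (externally supplied) proof in substance.
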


\section{Black-Hole Radiation Decoding}

We provide our formal definition for the problem of decoding black-hole radiation. We define, from the beginning, the general version in which the decoder is allowed to declare failure. Essentially, we require that with non-negligible probability failure is not declared, and when this happens, the radiation decoder distills a qubit from the radiation that is highly entangled with a qubit inside the horizon. This is defined in terms of fidelity with the EPR state. A successful decoding will lead to non-trivial fidelity of more than $1/4$ (which is the fidelity of the EPR state with completely unentangled qubits). The formal definition follows.

\begin{definition}[Black-Hole Radiation Decoder]\label{def:rdecoder}
	Let $\ketpsi_{HBR}$ be a pure state supported over $3$ registers $HBR$, where $B$ is a single-qubit register. A \emph{radiation decoder} (or just ``decoder'' when the context is clear) is a quantum channel $\cD[R \to DF]$, where $D,F$ are each a single-qubit register. We refer to $F$ as the ``failure qubit'' and to $D$ as the ``decoded value''.
	
	For $\gamma, \epsilon \in [0,1]$ We say that $\cD$ is $(\gamma, \epsilon)$-decoder (against $\ketpsi$) if the following holds. Consider the experiment where $HBR$ are initialized to $\ketpsi_{HBR}$, and then $\cD[R \to DF]$ is applied. Then the reduced density matrix of $FBD$ is
	\begin{align}
		\rho_{FBD} = \gamma' \ketbra{0}_F \otimes (\rho_0)_{BD} + (1-\gamma') \ketbra{1}_F \otimes (\rho_1)_{BD}~,
	\end{align}
for some $\gamma' \ge \gamma$, and unit trace $\rho_0, \rho_1$, and in addition $\braepr\rho_0\epr \ge (1/4)+(3/4)\epsilon$.

We use the following terminology for asymptotic families of decoders:
\begin{itemize}
	\item A $(1,1-\eta)$-decoder with negligible $\eta$ is a \emph{strong decoder}.
	\item A $(\gamma,\epsilon)$-decoder which is computationally efficient and for which $\gamma \cdot \epsilon$ is non-negligible, is an \emph{effective decoder}. If in addition $\epsilon = 1-o(1)$, we call it a \emph{high confidence} decoder.
\end{itemize}
\end{definition}
The scaling for $\epsilon$ is chosen so that the effective range of $\epsilon$ is $[0,1]$ (this unfortunately leads to an asymmetry between $\epsilon >0$ and $\epsilon<0$ but it will not be problematic for us).

\begin{assumption}[Black-Hole Radiation Decoding ($\bhrd$)]\label{a:bhrd}
	There exists an ensemble of efficiently generated pure states as per Definition~\ref{def:rdecoder}, for which there exists a strong (inefficient) decoder, but there does not exist an effective decoder.
\end{assumption}

The following is a simple observation.
\begin{proposition}
If there exists a $(\gamma, \epsilon)$-decoder against a state $\ketpsi$ then there also exists a $(1, \gamma \cdot \epsilon)$-decoder against $\ketpsi$.
\end{proposition}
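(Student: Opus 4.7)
The plan is to construct a new decoder $\cD'$ that never declares failure, by intercepting the failure flag of the original $(\gamma,\epsilon)$-decoder $\cD$ and, in the failure branch, replacing the quantum output by a maximally mixed qubit (which is uncorrelated with $B$ and thus contributes fidelity exactly $1/4$ with the EPR state). Concretely, $\cD'[R \to DF]$ will proceed as follows: run $\cD[R \to \tilde{D}\tilde{F}]$; measure $\tilde{F}$ in the computational basis to obtain a classical bit $f$; if $f=0$, output $D := \tilde{D}$; if $f=1$, discard $\tilde{D}$ and prepare $D$ as a freshly sampled maximally mixed single qubit (for instance, by initializing a qubit to $\ket{0}$ and XORing a uniformly random classical bit into it). In all cases, output $F := \ket{0}$.

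By construction $\cD'$ places the failure qubit $F$ deterministically in $\ket{0}$, so the ``success probability'' of $\cD'$ is $\gamma''=1$. It remains to estimate the fidelity of its decoded output with an EPR pair. Writing the post-$\cD$ state $\rho_{FBD}$ in its decomposition from Definition~\ref{def:rdecoder}, the new output state takes the form
\begin{align}
\rho'_{FBD} = \ketbra{0}_F \otimes \bigl( \gamma' (\rho_0)_{BD} + (1-\gamma')\, (\rho_{1,B}) \otimes (I/2)_D \bigr),
\end{align}
where $\rho_{1,B} = \tr_D(\rho_1)$ is the reduced state on $B$ in the failure branch.

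The key computational observation is that for any single-qubit state $\sigma$, one has $\bra{\mathrm{epr}}(\sigma \otimes I/2)\ket{\mathrm{epr}} = 1/4$ (a one-line calculation using the Schmidt form of $\ket{\mathrm{epr}}$). Hence
\begin{align}
\bra{\mathrm{epr}} \rho'_0 \ket{\mathrm{epr}}
\;\ge\; \gamma'\bigl(1/4 + (3/4)\epsilon\bigr) + (1-\gamma') \cdot 1/4
\;=\; 1/4 + (3/4)\gamma'\epsilon
\;\ge\; 1/4 + (3/4)\gamma\epsilon,
\end{align}
using $\gamma' \ge \gamma$. Thus $\cD'$ is a $(1, \gamma \cdot \epsilon)$-decoder as claimed.

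There is no real obstacle here; the construction is a routine ``replace failure output by a useless but maximally uncorrelated qubit'' trick, and its correctness reduces to the identity $\bra{\mathrm{epr}}(\sigma \otimes I/2)\ket{\mathrm{epr}} = 1/4$. The only mild subtlety is to make sure that replacing $\tilde{D}$ by a fresh maximally mixed qubit in the failure branch is implementable as a CPTP map (indeed it is: trace out $\tilde{D}$, then introduce a new qubit entangled with an ancilla that is immediately traced out), and that the conditioning on the classical bit $f$ can be realized by a quantum circuit of essentially the same complexity as $\cD$.
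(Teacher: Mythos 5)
Your construction is exactly the one the paper uses: run the original decoder, measure the failure flag, keep $\tilde D$ on success and substitute a maximally mixed qubit on failure, and always report success. The fidelity calculation via $\bra{\mathrm{epr}}(\sigma \otimes I/2)\ket{\mathrm{epr}} = 1/4$ correctly fills in the "follows by calculation" step, so the proposal is correct and takes the same approach as the paper.
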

\begin{proof}
Given a $(\gamma, \epsilon)$-decoder $\rdec'$, consider a decoder $\rdec[R \to DF]$ as follows that first runs the original $\rdec'[R \to D'F']$ and measures the register $F'$. If $f'=0$ then set $D=D'$, otherwise set $D$ to contain a maximally mixed state. Set $F$ to $\ketbra{0}$. The result follows by calculation.
\end{proof}

\subsection{The Superdense Decoding Problem}
\label{sec:superdense}

\def\esdc{\mathsf{SupDec}}

We introduce a computational problem that is related to the well known notion of quantum superdense coding. We relate it to the problem of radiation decoding and show that the two are essentially equivalent. We believe that the superdense formulation of the problem may be a useful one, and indeed this formulation plays an important role in our constructions of $\efi$ in Section~\ref{sec:bhrdtoefi}.

The formal definition follows. We wish to emphasize that both the radiation decoder and the superdense decoder are defined with respect to the a state (or an ensemble of states) over $HBR$, but whereas a radiation decoder only takes $R$ as input, a superdense decoder takes $BR$ as input (where a classical challenge is encoded into the register $B$).

\begin{definition}\label{def:supdecoder}
	Let $\ketpsi_{HBR}$ be a pure state supported over $3$ registers $HBR$, where $B$ is a single-qubit register. A Decoder is a quantum channel $\cS[BR \to PF]$, where $P$ is a two-qubit register and $F$ is a single-qubit register. We consider the following experiment:

	\begin{enumerate}
		\item Initialize the registers $HBR$ to $\ketpsi_{HBR}$.
		\item Sample two random bits $x,z$.
		\item Apply the Pauli mask $X^x Z^z$ to the register $B$ (this is also known as a quantum one-time-pad encryption).
		\item Apply $\sdec[BR \to PF]$.
		\item Measure $P, F$ in the computational basis to obtain values $(x',z'), f$.
	\end{enumerate}

	For $\gamma, \epsilon \in [0,1]$, we say that $\sdec$ is $(\gamma, \epsilon)$-superdense-decoder (against $\ketpsi$) if $\Pr[f = 0] \ge \gamma$ and $\Pr[(x',z') = (x,z) | f=0] \ge 1/4+(3/4)\epsilon$.
\end{definition}

The following lemma establishes the connection between superdense decoding and radiation decoding (in any parameter regime).
\begin{lemma}\label{lem:rdecsdec}
	Let $\ketpsi_{HBR}$ be a pure state supported over $3$ registers $HBR$, where $B$ is a single-qubit register. Then there exists a $(\gamma, \epsilon)$-radiation decoder against $\ketpsi$ if and only if there exists a $(\gamma, \epsilon)$-superdense decoder against $\ketpsi$ with comparable computational complexity.
\end{lemma}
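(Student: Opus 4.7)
The plan is to prove the ``iff'' by exhibiting explicit reductions in both directions, using the two dual primitives of superdense coding and quantum teleportation, and tracking the success parameters so that the reductions are essentially tight. Both constructions will add only a constant number of qubits and Clifford operations, which gives the ``comparable computational complexity'' clause immediately.

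For the forward direction (radiation decoder $\Rightarrow$ superdense decoder), given $\rdec[R \to DF]$ I would define $\sdec[BR \to PF]$ to first run $\rdec[R \to DF]$ and then apply a Bell measurement on $BD$, writing the two-bit outcome into $P$ (i.e.\ $\cnot_{BD}$ followed by $H_B$ and a computational-basis measurement, as in the preliminaries). Conditioned on $f=0$, the state of $BD$ before the superdense mask is some $\rho_0$ with $\braepr\rho_0\epr \ge 1/4 + (3/4)\epsilon$. Masking $B$ by $X^xZ^z$ and then Bell-measuring returns $(x,z)$ with probability $\braepr\rho_0\epr$ independently of $(x,z)$, because $\bra{\eprtxt_{xz}}(X^xZ^z\otimes I) = \bra{\epr}$; averaging over a uniform $(x,z)$ therefore yields $\Pr[(x',z')=(x,z)\mid f=0] \ge 1/4 + (3/4)\epsilon$, and $\Pr[f=0]$ is inherited from $\rdec$ unchanged.

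For the backward direction I use entanglement swapping. Given $\sdec[BR \to PF]$ with a unitary dilation $U$ acting on $B^*RE$ (ancilla $E$ initialized to $\ket{0}$), I build $\rdec[R\to DF]$ by: (i) preparing a fresh EPR pair on $(B^*,D)$; (ii) applying $U$ on $(B^*,R,E)$; (iii) measuring $P$ to obtain $(x',z')$; and (iv) applying the Pauli correction $(X^{x'}Z^{z'})_D$. The central step is the teleportation identity $\bra{\epr}_{BD}\,\epr_{B^*D} = \tfrac{1}{2}\,S_{B\to B^*}$, where $S_{B\to B^*}=\sum_b \bra{b}_B\otimes \ket{b}_{B^*}$ relabels the $B$-register onto $B^*$; combined with $\bra{\eprtxt_{x'z'}}_{BD} = \bra{\epr}_{BD}(I_B\otimes Z^{z'}X^{x'}_D)$, it lets me rewrite
\begin{align*}
\Pr_{\rdec}[f=0]\cdot \braepr \rho_{BD\mid f=0}\epr = \sum_{x',z'}\bigl\|\Pi_{x'z',0}\otimes \bra{\eprtxt_{x'z'}}_{BD}\cdot U\ket{\psi}_{HBR}\epr_{B^*D}\ket{0}_E\bigr\|^2
\end{align*}
as $\tfrac{1}{4}\sum_{x',z'}\bigl\|\Pi_{x'z',0}\cdot U(X^{x'}Z^{z'})_{B^*}\ket{\psi'}_{HB^*R}\ket{0}_E\bigr\|^2$, which is exactly $\Pr_{\sdec}[f=0]\cdot \Pr_{\sdec}[(x',z')=(x,z)\mid f=0]$ under a uniformly random superdense challenge on $\ketpsi$. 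A separate Pauli-twirl calculation, using that $\tr_D[\kbepr]=\tfrac{1}{2}I_{B^*}$ coincides with $\tfrac{1}{4}\sum_{x,z}(X^xZ^z)_B\rho_{BR}(X^xZ^z)_B^\dagger$, shows $\Pr_{\rdec}[f=0]=\Pr_{\sdec}[f=0]\ge \gamma$; dividing the two identities transports the conditional bound $1/4+(3/4)\epsilon$ from $\sdec$ to $\rdec$.

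The main obstacle I anticipate is the bookkeeping around the Pauli correction in the backward direction: making precise that the EPR projector on $BD$ absorbs $(X^{x'}Z^{z'})_D$ into the Bell projector $\ket{\eprtxt_{x'z'}}\bra{\eprtxt_{x'z'}}_{BD}$, and that the swap identity cleanly relocates $B$ onto $B^*$ so that the expression matches a genuine superdense challenge on $\ketpsi$ (rather than an approximate or modified one). Once these two identities are verified, both $\gamma$ and $\epsilon$ are preserved exactly and the only overhead is an EPR pair plus a Pauli correction, establishing the stated equivalence with comparable complexity.
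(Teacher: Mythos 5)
Your proposal is correct and follows essentially the same route as the paper: the forward direction is the standard Bell-measurement superdense decoding on top of $\rdec$'s output, and the backward direction feeds $\sdec$ a fresh half-EPR as a fake $B$ and Pauli-corrects the other half based on the measured $P$, which is exactly the paper's teleportation-based construction. The only difference is presentational — you verify parameter preservation via explicit operator identities (entanglement swapping plus a Pauli twirl), whereas the paper argues that the quantum one-time pad experiment is equivalent to a deferred Bell measurement that commutes past $\sdec$ — and both yield the same exact preservation of $(\gamma,\epsilon)$.
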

\begin{proof}
	Let $\ketpsi_{HBR}$ be as in the lemma statement. We consider both directions of the lemma.
	
	\paragraph{Radiation-Decoder Implies Superdense-Decoder.} This direction is almost immediate. Given a radiation decoder $\rdec$, we can use it to decode the radiation $R$. If the experiment is successful, then whenever in the original experiment $BD$ constitute an EPR pair, we can use standard superdense decoding procedure to recover $x,z$.
	
	More formally, let $\rdec$ be a $(\gamma,\epsilon)$-radiation decoder. We define a superdense decoder $\sdec[BR \to PF]$ as follows.
	\begin{enumerate}
		\item Apply $\rdec[R \to DF]$.
		\item Measure $BD$ in the Bell basis and store the outcome in $P$.
	\end{enumerate}
	
	The parameters $(\gamma,\epsilon)$ remain the same by the properties of superdense coding.

	\paragraph{Superdense-Decoder Implies Radiation-Decoder.} Let $\sdec$ be a $(\gamma, \epsilon)$-superdense decoder.
	We start by noticing that applying a quantum one-time pad to a register is equivalent to performing quantum teleportation onto a different variable. Thus, the experiment from Definition~\ref{def:supdecoder} is equivalent to the following.
	\begin{enumerate}
		\item Initialize the registers $HBR$ to $\ketpsi_{HBR}$.
		\item Initialize two single-qubit registers $DT$  to $\epr_{DT}$.
		\item \label{s:BTepr} Measure $BD$ in the Bell basis to obtain values $x,z$. (This is the process of teleporting the register $B$ into the register $T$ via the register $D$.)
		\item Apply $\sdec[TR \to PF]$.
		\item Measure $P$ in the computational basis to obtain values $(x',z')$.
	\end{enumerate}	
	
	We therefore conclude that in the above experiment, $\sdec$ outputs $F$ which takes the value $0$ with probability at least $\gamma$, and letting $f$ being the measured value of $F$, we have $\Pr[(x',z') = (x,z) | f=0] \ge 1/4+(3/4)\epsilon$. 	
	
	We note that step~\ref{s:BTepr} in the above experiment (the $BD$ measurement) commutes with all following steps of the experiment until the final one. Let us therefore defer it until the end of the experiment.

	We now define a radiation decoder $\rdec[R \to DF]$ as follows.
\begin{enumerate}
	\item Initialize two single-qubit registers $DT$  to $\epr_{DT}$.
	\item Apply $\sdec[TR \to PF]$.
	\item Measure $P$ in the computational basis to obtain values $x',z'$.
	\item Apply $Z^{z'}X^{x'}$ to $D$.
	\item Output $DF$.
\end{enumerate}	
	
	Measuring the value of $F$ in the computational basis to obtain a value $f$, we have that $\Pr[f=0] \ge \gamma$ by the properties of $\sdec$. Now let us consider the reduced density matrix of $PBD$ conditioned on $f=0$ (this is a trace-$1$ density matrix). This matrix can be written as
	\begin{align}
		\rho_{PBD} = \sum_{x',z'} c_{x',z'} \ketbra{x',z'}_{P} \otimes (\rho_{x',z'})_{BD}~,
	\end{align}
	where $c_{x',z'}$ are real values summing to $1$, and $\rho_{x',z'}$ are proper trace-$1$ density matrices. We know that the probability of measuring $BD$ in the Bell basis and obtaining values that are equal to $(x',z')$ is $1/4+(3/4)\epsilon$. We notice that the success probability would have been the same if we performed $X^{x'}Z^{z'}$ on $D$ and applied the $\kbepr$ projector, which is exactly the success probability of $\rdec$, conditioned on $f=0$. This concludes this case. %
\end{proof}

\section{Cryptology-Induced Gravitational Phenomena}

We show that Assumption~\ref{a:efi} implies Assumption~\ref{a:bhrd}. An efficiently generatable state $\ketpsi_{HBR}$ for which it is hard to decode entanglement. We in fact show that the existence of $\efi$ also implies such states where it is impossible to detect \emph{any} correlation between the $B$ and $R$ registers. In other words, $BR$ are computationally indistinguishable from being in tensor product. This is therefore a stronger result than that of \cite{HH13,A16}, where some classical or Hadamard correlations can be efficiently recovered.

\begin{theorem}
	If there exists an EFI pair as per Assumption~\ref{a:efi} then there exists an ensemble of hard-to-decode radiation states as per Assumption~\ref{a:bhrd}.
\end{theorem}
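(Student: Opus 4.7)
The plan is to construct a pure state $\ketpsi_{HBR}$ using an EPR pair whose $T$-half is masked by a random quantum one-time pad, together with quantum commitments to the pad keys. A non-interactive statistically-binding computationally-hiding quantum bit commitment is available since it is equivalent to EFI, and by Proposition~\ref{prop:efi} we may assume strong EFI. Let $\ket{\mathrm{Com}_b}_{\alpha\beta}$ denote a purification of the commitment of a bit $b$, where $\alpha$ is the commit register and $\beta$ is the decommit register. Define
\[
	\ketpsi \;=\; \tfrac{1}{2}\sum_{x,z \in \zo} \ket{x}_X\ket{z}_Z \otimes \ket{\mathrm{Com}_x}_{\alpha_x\beta_x} \otimes \ket{\mathrm{Com}_z}_{\alpha_z\beta_z} \otimes (I_B \otimes X^x Z^z)\epr_{BT}~,
\]
with $H=(X,Z,\beta_x,\beta_z)$, $R=(T,\alpha_x,\alpha_z)$, and $B$ a single qubit. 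The state is efficiently generatable by executing the commitment procedure in superposition.

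For the required inefficient strong decoder, apply the (inefficient) extractor of the commitment scheme to $\alpha_x$ and $\alpha_z$ into fresh registers $Y_x,Y_z$; statistical binding ensures $Y_x,Y_z$ become classical copies of $x,z$ up to negligible error while only negligibly disturbing the ambient state. Now coherently apply $Z^{Y_z} X^{Y_x}$ on $T$, set $F:=0$, and output $D:=T$. The Pauli mask on $T$ is undone, so $\rho_{BD}$ is negligibly close to $\kbepr$, giving a $(1,1-\mathrm{negl})$-decoder as required by Assumption~\ref{a:bhrd}.

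To rule out an efficient effective decoder, switch to the superdense formulation via Lemma~\ref{lem:rdecsdec}: it suffices to contradict the existence of an efficient $(\gamma,\epsilon)$-superdense decoder $\sdec$ with non-negligible $\gamma\epsilon$. Introduce the hybrid state $\ketpsi'$ obtained from $\ketpsi$ by replacing both $\ket{\mathrm{Com}_x}$ and $\ket{\mathrm{Com}_z}$ with $\ket{\mathrm{Com}_0}$ while keeping the Pauli mask $X^x Z^z$ on $T$ intact. A standard two-step hybrid using computational hiding of the commitment scheme (a direct consequence of EFI's computational indistinguishability, applied at the $\alpha$ registers with $\beta$ safely inside $H$) shows that no efficient test on $BR$ can distinguish $\ketpsi$ from $\ketpsi'$ with non-negligible advantage. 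On the other hand, in $\ketpsi'$ the registers $X,Z$ decouple from $\alpha_x,\alpha_z$; averaging via $\sum_{x,z}(I\otimes X^x Z^z)\kbepr(I\otimes X^x Z^z)^{\dagger}=I$ makes $\rho_{BT} = I/4$, so after the superdense challenge $X^{x^*} Z^{z^*}$ on $B$ the input $BR$ to $\sdec$ is statistically independent of $(x^*,z^*)$. Thus on $\ketpsi'$ one has $\Pr[f=0 \wedge (x',z')=(x^*,z^*)] = \Pr[f=0]/4$.

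The efficient test that runs the superdense experiment and outputs $1$ iff $f=0$ and $(x',z')=(x^*,z^*)$ accepts with probability $\ge \gamma'(1/4 + (3/4)\epsilon)$ on $\ketpsi$ and exactly $\gamma''/4$ on $\ketpsi'$, where $\gamma',\gamma''$ denote $\Pr[f=0]$ on the two states. Since the single-bit indicator ``is $f=0$?'' is itself an efficient test on $BR$, computational hiding forces $|\gamma'-\gamma''|=\mathrm{negl}$, so the distinguishing advantage is at least $(3/4)\gamma\epsilon - \mathrm{negl}$, which is non-negligible and contradicts computational hiding. The main delicate point is the hybrid argument itself, which must invoke indistinguishability of the commitment content while the Pauli mask on $T$ remains entangled with $X,Z$; this is clean precisely because $\beta_x,\beta_z$ carry all further correlation and live in $H$, outside $\sdec$'s view.
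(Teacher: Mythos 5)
Your proposal is correct and follows essentially the same route as the paper: an EPR pair whose $T$-half is masked by a quantum one-time pad, with the pad keys hidden under a computationally-hiding but statistically-recoverable object placed in $R$ (purification kept in $H$), an inefficient decoder that statistically extracts the keys and undoes the mask, and a two-step hybrid replacing the key-dependent states by fixed ones to make $B$ and $R$ computationally indistinguishable from a tensor product. The only presentational differences are that the paper instantiates the ``commitment'' directly by the EFI purifications $\ket{\varphi_x},\ket{\varphi_z}$ rather than invoking the equivalent commitment primitive, and runs the hybrid argument on the radiation decoder directly (via Proposition~\ref{prop:indisthybrid}) instead of passing through the superdense formulation of Lemma~\ref{lem:rdecsdec}.
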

\begin{proof}
	Consider a strong EFI pair of states (recalling Proposition~\ref{prop:efi}, EFI implies strong EFI). By considering the purification of the generating channel of the pair of states, there exist efficiently generatable ensembles of \emph{pure} states $\{\ket{\varphi_b}\}_{b \in \binset}$ over registers $HR'$ such that $\tr_H\left[ \ketbra{\varphi_b}_{HR'} \right]$ are strong EFI.
	
	Consider the ensemble of states defined over registers $H'BTH_1R'_1H_2R'_2$ as
	\begin{align}
		\ketpsi &= \frac{1}{2\sqrt{2}}\sum_{b,x,z \in \binset} \ket{xz}_{H'} \ket{b}_B (X^x Z^z\ket{b}_T) \ket{\varphi_x}_{H_1R'_1} \ket{\varphi_z}_{H_2R'_2}~.
	\end{align}
	We denote $H = H' H_1 H_2$, $R = T R'$ so that $\ketpsi$ is defined over $HBR$.

	\paragraph{Efficient Generation.} The above state is generated in the following steps:
	\begin{enumerate}
		\item Generate an EPR pair in $BT$.
		\item Generate $\ket{++}$ in $H'$.
		\item Apply a ``quantum one-time pad'' $X^x Z^z$, controlled by $H'$ on the qubit in $T$. (This is equivalent to converting $BT$ to $\ket{\eprtxt_{xz}}$, controlled by the values $xz$ in $H'$.)
		\item Encode $x,z$ using the $\efi$. Formally, controlled by the individual bits of $H'$, generate either $\ket{\varphi_0}$ or $\ket{\varphi_1}$ in $H'_1 R'_1$ controlled by the first qubit of $H'$, and likewise in $H'_2 R'_2$ controlled by the second qubit of $H'$.
	\end{enumerate}
	
	\paragraph{Inefficient Decoder.}
	By the strong EFI property, there exists a possibly inefficient channel $\distn$ so that $\distn[R' \to D]$ distinguishes $\ketbra{\varphi_1}$ from $\ketbra{\varphi_0}$ with $(1-\eta)$ advantage, for a negligible $\eta$. 
	
	We now show an inefficient $(1, 1-(4/3)\eta)$-radiation decoder $\rdec[R \to TF]$ (note that the register $T$ will be our output register).
	\begin{enumerate}
		\item Apply $\cD[R'_1 \to D_1]$.
		\item Apply $\cD[R'_2 \to D_2]$.
		\item Apply $\cz_{D_2 T}$.
		\item Apply $\cnot_{D_1 T}$.
		\item Initialize $F$ to $\rho_F = \ketbra{0}$.
	\end{enumerate}
	
	To analyze, let us consider the state of $H'BTD_1D_2$ after the two applications of $\cD$.
	Consider an experiment where the $\cz$ and $\cnot$ gates are controlled by $H'_2, H'_1$ instead of $D_2, D_1$. In that case, essentially by definition, $BT$ is exactly in $\epr$. Therefore, the probability that $BT$ is not measured in $\epr$ in the experiment is at most the probability that $H'$ is not equal to $D_1 D_2$, which is at most $\eta$ by Proposition~\ref{prop:distinguish} and the union bound: $2\cdot (1- (1/2+(1-\eta)/2))=\eta$.

	\paragraph{No Effective Decoder.}
	Assume there exists an efficient $(\gamma, \epsilon)$-radiation-decoder $\rdec$ for the state $\ketpsi$. We show how it can be used to construct an efficient distinguisher $\distn$ for the EFI with advantage $\gamma \cdot \epsilon$. This would show that an effective decoder contradicts the computational indistinguishability property of the EFI.
	
	We will use a hybrid argument. We describe a sequence of experiments (hybrids) that will allow us to draw the required conclusion.
	
	\begin{itemize}
		\item Hybrid $0$. This is the standard radiation decoding game as per Definition~\ref{def:rdecoder}, where $\rdec$ attempts to decode the radiation in encoded in $\ketpsi$.
		
		\item Hybrid $1$. In this experiment, we change the way $\ketpsi$ is generated. In particular we generate $H_2 R'_2$ to always contain $\ket{\varphi_0}$, regardless of the value of $x$.
		
		\item Hybrid $2$. In this experiment, we again change the way $\ketpsi$ is generated. We keep $H_2 R'_2$ as in the previous hybrid, but now we also generate $H_1 R'_1$ to always contain $\ket{\varphi_0}$, regardless of the value of $z$.
	\end{itemize}

We let $\gamma_i$ denote the probability of measuring $0$ in $F$ in Hybrid~$i$, and by $\epsilon_i$ the probability of measuring EPR in $BD$ conditioned on measuring $0$ in $F$. We argue that it must be the case that 
$\abs{\gamma_i\epsilon_i - \gamma_{i+1}\epsilon_{i+1}}$,
are negligible. Otherwise, by Proposition~\ref{prop:indisthybrid}, it would allow us to distinguish the EFI states. %

Finally, it must be the case that in Hybrid~$2$, even when $F$ measures to $0$, the probability to measure $\epr$ in $BD$, no matter how $D$ is generated, is exactly $1/4$. 	In fact, in Hybrid~$2$, it is indeed the case that $R$ and $B$ are in tensor product, since $R'_1, R'_2$ are independent of $x,z$, and therefore $T$ is in maximally mixed state. Indeed, in this hybrid there simply does not exist \emph{any} correlation between $B$ and $R$. Since the reduced density matrix of $B$ is maximally mixed, we have that the probability of measuring EPR is exactly $1/4$.  The theorem thus follows.
\end{proof}

\section{Gravitationally-Induced Cryptology}
\label{sec:bhrdtoefi}

We start by showing the ``vanilla'' version of our claim. We consider the following two distributions.

\begin{definition}\label{def:efiplain}
	Let $\ketpsi$ be an efficiently generatable ensemble of states over registers $HBR$. We consider the following EFI candidate distribution. Specifically, the distribution $\rho_{b}$ for $b \in \binset$ is defined as follows.
	\begin{enumerate}
		\item Initialize $HBR$ to $\ketpsi_{HBR}$.
		
		\item Sample uniformly $x,z\in \binset$.
		
		\item Apply $(X^{x}Z^{z})_{B}$.
		
		\item If $b=0$ set $x'=x$, $z' =z$, otherwise (if $b=1$) sample $x', z' \in \binset$ uniformly at random.
		
		\item Output $BR$ and $x'z'$.
	\end{enumerate}
\end{definition}

\begin{theorem}
Let $\ketpsi$ be efficiently generatable, strongly decodable but not effectively decodable, as per Assumption~\ref{a:bhrd}. Then the pair of distributions in Definition~\ref{def:efiplain} is EFI as per Assumption~\ref{a:efi}.
\end{theorem}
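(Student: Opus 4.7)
The plan is to verify the three $\efi$ properties in order: efficient generation, statistical farness, and computational indistinguishability, with only the last requiring the decoding hardness hypothesis. Efficient generation is immediate, since the procedure in Definition~\ref{def:efiplain} only runs the (efficient) generator for $\ketpsi$, samples classical bits $(x,z)$, applies the corresponding Pauli on $B$, and either copies or re-randomizes the label.

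For statistical farness, I would invoke Lemma~\ref{lem:rdecsdec} on the assumed strong (inefficient) radiation decoder to obtain an inefficient $(1,1-\eta)$-superdense decoder $\sdec$ for some negligible $\eta$. The test ``run $\sdec$ on $BR$ and check whether its answer equals the attached label'' accepts with probability at least $1-(3/4)\eta$ on $\rho_0$ (where the label equals the true Pauli) and with probability exactly $1/4$ on $\rho_1$ (where the label is independent of everything else, so any decoder's output coincides with it with probability $1/4$), giving statistical distance at least $3/4-o(1)$.

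For computational indistinguishability I would argue by contrapositive: given an efficient distinguisher $\cD$ with non-negligible advantage $\epsilon'$, I build a superdense decoder $\sdec'$ that, on input $BR$ with an unknown Pauli $y=(x,z)$ applied to $B$, samples a uniform candidate $y''\in\binset^2$, runs $\cD$ on $(BR,y'')$, outputs $y''$ if $\cD$ accepts, and otherwise outputs a uniformly random element of the three remaining Pauli values. Writing $p_1 = \Ex_y[\Pr[\cD(BR,y)=1\mid y]]$ and $p_0 = \Ex_{y,y'\neq y}[\Pr[\cD(BR,y')=1\mid y]]$, the definitional advantage unfolds as $\epsilon' = (3/4)(p_1-p_0)$ and the success probability of $\sdec'$ as $(1/4)p_1 + (1/4)(1-p_0) = 1/4 + \epsilon'/3$, which makes $\sdec'$ a $(1,(4/9)\epsilon')$-superdense decoder. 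Lemma~\ref{lem:rdecsdec} then converts it into an effective radiation decoder, contradicting Assumption~\ref{a:bhrd}.

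The only subtle point in this last step is confirming that the state of $BR$ conditioned on the Pauli being $y$ is identically distributed in $\rho_0$ and in the matching component of $\rho_1$, so that the same $p_1,p_0$ describe $\cD$'s behavior in both distributions; this is transparent from Definition~\ref{def:efiplain}. With only four Paulis in play the reduction is essentially combinatorial and avoids any hard-core-bit machinery, so the quantum Goldreich--Levin theorem (Theorem~\ref{thm:qgl}) and the Estimation Lemma (Lemma~\ref{lem:est}) are not invoked here; I anticipate that the main obstacle in the follow-up strengthening of this result to rule out even high-confidence decoders will be precisely the amplification that forces one to bring in those tools.
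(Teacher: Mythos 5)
Your proposal is correct and follows essentially the same route as the paper: the inefficient strong superdense decoder obtained via Lemma~\ref{lem:rdecsdec} gives the statistical gap of roughly $3/4$, and an efficient distinguisher is converted into an effective superdense decoder by the same guess-a-label-and-check reduction, then back into a radiation decoder via Lemma~\ref{lem:rdecsdec}. The only differences are cosmetic --- on rejection you resample from the three remaining Pauli labels rather than outputting a maximally mixed $P$, which slightly changes the constant ($4\epsilon'/9$ versus the paper's $\delta/12$) but nothing structural, and, as in the paper, one should note that the distinguisher's signed advantage can be taken positive without loss of generality by negating its output.
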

\begin{proof}
Since $\ketpsi$ is strongly radiation-decodable, it is also strongly superdense decodable. We show an inefficient distinguisher for the $\efi$ candidate as follows. Given as input $BRx'z'$, apply the inefficient $(1,1-\eta)$-superdense decoder to $BR$ to obtain values $x'' z''$. If these values are equal to $x'z'$ output $0$, otherwise output $1$. The probability that this distinguisher outputs $1$ on $\rho_0$ is at most $(4/3)\eta$. The probability that it outputs $1$ on $\rho_1$ is at least $(3/4)(1-(4/3)\eta)$. The distinguishing gap thus follows.	
	
On the other hand, let $\distn$ be a distinguisher between $\rho_0, \rho_1$ as per Definition~\ref{def:efiplain}, with advantage $\delta$. We show that there exists a $(1,\delta/12)$-superdense decoder $\sdec$ against $\ketpsi$ with roughly the same complexity. Thus, by Lemma~\ref{lem:rdecsdec}, there is also a radiation decoder against $\ketpsi$ and the theorem would follow.

Our superdense decoder, upon receiving an input in registers $BR$, does the following.
\begin{enumerate}
	\item Sample $x',z' \in \binset$ uniformly at random.
	
	\item Apply $\distn[BRx'z' \to T]$ and measures the outcome in the computational basis to obtain a value $t$.
	
	\item If $t=0$, set $P$ to the classical value $xz = x'z'$, otherwise set $P$ to a maximally mixed state.
	Set $F$ to $\ketbra{0}$.
\end{enumerate}

Note that by definition, this $\sdec$ has $\gamma=1$. We show that its $\epsilon$ is related to the distinguishing advantage $\delta$. To this end, we let $\delta_b$ denote the probability that $\distn$ outputs $0$ on $\rho_b$. We denote by $\delta_{\bot}$ the probability that $\distn$ outputs $0$ when its input consists of properly generated $BR$, along with $x'z'$ that are sampled to be \emph{not equal} to the actual $xz$. A simple calculation implies that $\delta_1 = \tfrac{1}{4}\delta_0 + \tfrac{3}{4}\delta_\bot$. By the definition of $\distn$ we have that $\delta_0 - \delta_1 = \delta$.

The probability that $\sdec$ outputs the correct values $x,z$ in the superdense decoding experiment is therefore:
\begin{align}
	\frac{1}{4} \delta_1 \cdot 1 + \frac{1}{4} (1-\delta_1) \cdot \frac{1}{4} + \frac{3}{4}(1-\delta_\bot)\cdot \frac{1}{4}~.
\end{align}
The three terms in this expression are as follows:
\begin{enumerate}
	\item If $x'z'$ were sampled identically to the real $xz$, which happens with probability $1/4$ then:
	\begin{enumerate}
		\item With probability $\delta_1$, $\distn$ outputs $0$ and we output the correct value. This event therefore has a total probability of $\frac{1}{4} \delta_1 \cdot 1$.
		
		\item With probability $(1-\delta_1)$, $\distn$ outputs $1$ and we output a random value, which hits the correct value with probability $1/4$. This event therefore has a total probability of $\frac{1}{4} (1-\delta_1) \cdot \frac{1}{4}$.	
	\end{enumerate}

\item If $x'z'$ were sampled to not be equal to $xz$, which happens with probability $3/4$, then we can only win if $\distn$ outputs $1$, which happens with probability $(1-\delta_\bot)$, and in addition the randomly sampled output hits the correct values $xz$, which happens with probability $1/4$. This event therefore has a total probability of $\frac{3}{4}(1-\delta_\bot)\cdot \frac{1}{4}$.
\end{enumerate}

The $3$ terms above sum to $\frac{1}{4} + \frac{\delta}{16}$. It follows that indeed $\epsilon = (4/3)\cdot (\delta/16)$ and the result follows. %
\end{proof}

\subsection{Amplifying the Result}

We now show a candidate for $\efi$ which is proven to remain $\efi$ even if the underlying state is only mildly hard to decode.

\begin{definition}\label{def:efifrombhrd}
Let $\ketpsi$ be an ensemble of states, and let $n=n(\secp)$ be a parameter. We consider the following EFI candidate distribution. Specifically, the distributions $\rho_0, \rho_1$ that are defined as follows.
\begin{enumerate}
	\item Consider registers $H_i B_i R_i$, for $i \in [n]$, initialize each triple $H_i B_i R_i$ to $\ketpsi_{H_i B_i R_i}$.
	
	\item Sample uniformly random strings $\vec{x}, \vec{z} \in \binset^n$.
	
	\item For all $i$, apply $(X^{x_i}Z^{z_i})_{B_i}$.
	
	\item Sample uniformly random strings $\vec{a}, \vec{a}' \in \binset^n$.
	
	\item Calculate $c = (\vec{a} \cdot \vec{x}) \oplus (\vec{a}' \cdot \vec{z})$, where the inner product is over the binary field.
	
	\item The output of $\rho_b$ contains the following: 
	\begin{align}
		\left(\{ B_i R_i \}_{i\in [n]}, (\vec{a}, \vec{a}'), c' = c \oplus b\right)~.
	\end{align}
\end{enumerate}
\end{definition}

We now prove that it suffices that $\ketpsi$ is mildly hard to decode in order to imply that the above construction is $\efi$. For the sake of convenience, we split our argument into two lemmas. One that considers the efficiency of generation, and the existence of an inefficient distinguisher, and the other that converts an efficient distinguisher into an efficient high-confidence decoder.

\begin{lemma}
	Let $\ketpsi$ be efficiently generatable and strongly (possibly inefficiently) decodable as per Assumption~\ref{a:bhrd}. Then the pair of distributions in Definition~\ref{def:efifrombhrd} is efficiently generatable and statistically distinguishable for any polynomial $n(\secp)$.
\end{lemma}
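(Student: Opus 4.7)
The plan is to verify the two required properties separately. Efficient generation is essentially read off from Definition~\ref{def:efifrombhrd}: we generate $n=n(\secp)$ independent copies of $\ketpsi_{H_i B_i R_i}$ (efficient by hypothesis), sample the polynomial-length classical strings $\vec{x}, \vec{z}, \vec{a}, \vec{a}'$, apply the classically controlled Pauli gates $(X^{x_i} Z^{z_i})_{B_i}$ to each $B_i$, compute the inner-product bit $c$ classically, and set $c' = c \oplus b$ with $b \in \{0,1\}$ as appropriate. All of these are either fresh-sample operations, constant-arity gates, or polynomial-size classical computations on classical data, so the total complexity is $\poly(\secp)$.

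For statistical distinguishability, the plan is to lift the given strong (but possibly inefficient) radiation decoder into an inefficient distinguisher between $\rho_0$ and $\rho_1$. By Lemma~\ref{lem:rdecsdec}, a strong radiation decoder against $\ketpsi$ yields a $(1,1-\eta)$-superdense decoder against $\ketpsi$ for some negligible $\eta$. Applying this superdense decoder independently to each pair $B_i R_i$ produces estimates $(\widetilde{x}_i, \widetilde{z}_i)$ with $\Pr[(\widetilde{x}_i, \widetilde{z}_i) = (x_i, z_i)] \ge 1/4 + (3/4)(1-\eta) = 1-(3/4)\eta$ for each $i$. A union bound over $i \in [n]$ gives that all slots are recovered correctly with probability at least $1 - (3/4) n \eta$, which is $1-\negl(\secp)$ since $n$ is polynomial and $\eta$ is negligible.

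The distinguisher then computes $\widetilde{c} = \vec{a}\cdot \widetilde{\vec{x}} \oplus \vec{a}'\cdot \widetilde{\vec{z}}$ and outputs $\widetilde{c} \oplus c'$. Conditioned on the (all-slots-correct) event, $\widetilde{c} = c$ and the output equals $b$ with certainty; overall it equals $b$ with probability $1-\negl(\secp)$ regardless of $b \in \{0,1\}$. Hence the advantage is $1 - \negl(\secp)$, which in particular is non-negligible, so $\rho_0$ and $\rho_1$ are statistically distinguishable (in fact, the construction already meets the strong-EFI statistical-distance threshold, which will presumably be useful in the companion lemma that handles the computational side).

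There is no substantive obstacle here; the only point that needs a moment of care is the union bound, to make sure that amplifying across $n$ slots does not push the error past negligible. Since $n$ is polynomial in $\secp$ and $\eta$ is negligible, the product $n\eta$ remains negligible, and the argument goes through cleanly. The genuinely hard direction of the amplification story—converting an efficient distinguisher into a high-confidence superdense decoder via the Goldreich--Levin-style argument of Theorem~\ref{thm:qgl} and the Estimation Lemma (Lemma~\ref{lem:est})—is the content of the subsequent lemma and is not needed here.
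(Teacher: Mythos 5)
Your proposal is correct and follows essentially the same route as the paper: efficient generation is read off from the construction, and statistical distinguishability is obtained by converting the strong radiation decoder into a strong $(1,1-\eta)$-superdense decoder via Lemma~\ref{lem:rdecsdec}, applying it slot-by-slot, taking a union bound over the $n$ slots, and recovering $b$ from the inner-product bit. The only differences are cosmetic (your per-slot success bound $1-(3/4)\eta$ and resulting advantage bookkeeping are slightly tighter than the paper's stated $1-2n\eta$, but both suffice).
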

\begin{proof}
Since $\ketpsi$ is efficiently generatable, then our $\rho_b$ are also efficiently generatable for every polynomial $n$, since the generation of $\rho_b$ contains a generation of polynomially many copies of $\ketpsi$ as well as a polynomial number of elementary comptational steps.

Next we show that $\rho_0, \rho_1$ are inefficiently distinguishable, given that there exists a strong radiation decoder $\rdec$ against $\ketpsi$. We use Lemma~\ref{lem:rdecsdec} to argue that there also exists a strong $(1,1-\eta)$-superdense decoder $\sdec$ against $\ketpsi$ and use $\sdec$ to derive a distinguisher between $\rho_0, \rho_1$ with advantage at least $1-2n\eta$. Since $\eta$ is negligible and $n$ is polynomial, then $2 n \eta$ is also negligible. To see the above, we consider a distinguisher $\distn$ that takes as input a register $S$ containing $\left(\{ B_i R_i \}_{i\in [n]}, (\vec{a}, \vec{a}'), c'\right)$ and $\distn[S \to T]$ acts as follows:
\begin{enumerate}
	\item Apply $\sdec[B_i R_i \to P_i F_i]$, for all $i$. 
	\item Measure $P_i$ in the computational basis to obtain two classical bits $(x'_i, z'_i)$.
	\item Setting $\vec{x}' = (x_1, \ldots, x_k)$, $\vec{z}' = (z_1, \ldots, z_k)$, compute $b' = c' \oplus (\vec{a} \cdot \vec{x}') \oplus (\vec{a}' \cdot \vec{z}')$.
	\item Initialize a register $T$ to the classical value $b'$ and produce this register as the output. 
\end{enumerate}
We note that since $\cB$ has $\gamma=1$, we need not measure the register $F_i$ since it is guaranteed to always be identical to $0$.

To see the distinguishing gap of $\distn$ defined above, we notice that since $\sdec$ is a strong decoder and therefore with all but $\eta$ probability, $(x'_i, z'_i)=(x_i, z_i)$. Applying the union bound, we get that this holds for all $i$ with all but $n \eta$ probability. If this indeed holds then, by definition, $b'=b$. The distinguishing advantage thus follows.
\end{proof}

The next lemma establishes that an efficient distinguisher against our $\efi$ candidate would result in a high-confidence decoder, so long as $n$ is chosen to be sufficiently large, specifically $n(\secp) = \omega(\log \secp)$.
\begin{lemma}\label{lem:disttodec}
	Let $\ketpsi$ be efficiently generatable, and consider the pair of distributions in Definition~\ref{def:efifrombhrd}.
	If the aforementioned distributions are efficiently distinguishable with non-negligible advantage, then there is a $(\gamma,\epsilon)$-decoder with polynomially-related computational complexity against $\ketpsi$, where $\gamma$ is non-negligible, and $\epsilon = 1-O(\log(\secp)/n)$.
\end{lemma}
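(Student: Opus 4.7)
The plan is to build a high-confidence superdense decoder against $\ketpsi$ in three stages, and then invoke Lemma~\ref{lem:rdecsdec} to convert it into a radiation decoder. The three stages are: (i) turn the distinguisher into a ``joint'' superdense decoder $\sdec^\star$ that tries to recover every $(x_i,z_i)$ simultaneously, via the quantum Goldreich--Levin theorem; (ii) use the Estimation Lemma to isolate a single slot $i^\star$ that, conditional on the earlier slots being correct, is decoded with probability close to $1$; (iii) reduce single-slot decoding to $\sdec^\star$ by embedding the input into slot $i^\star$ and self-generating the remaining $n-1$ slots.

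For (i), observe that $\rho_0$ and $\rho_1$ in Definition~\ref{def:efifrombhrd} differ only in whether the bit $c'$ equals the inner product $c=(\vec a,\vec a')\cdot(\vec x,\vec z)$. Given a distinguisher $\distn$ with advantage $\delta$, the standard reduction --- sample $c'$ uniformly, run $\distn$ to get a bit $b$, and output $c'\oplus b$ (negating if $\delta<0$) --- yields a predictor for $c$ whose success probability is $1/2+\abs{\delta}/2$ on average over $\vec x,\vec z,\vec a,\vec a'$. Interpreting this predictor as a controlled unitary $U_{(\vec a,\vec a')}$ acting on the secret-dependent quantum auxiliary state $\ket{\psi_{\vec x,\vec z}}$ stored in $\{B_iR_i\}_i$, Corollary~\ref{cor:qgl} (invoked in its quantum-auxiliary-input form) produces an efficient extractor that recovers the entire vector $(\vec x,\vec z)$ with probability at least $\Ex_{\vec x,\vec z}[(2p(\vec x,\vec z)-1)^2]\ge \delta^2$ by Jensen's inequality, where $p(\vec x,\vec z)$ is the predictor's average success on that fixed secret. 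Call this extractor $\sdec^\star$; it is efficient since it invokes a purification of $\distn$ and its inverse a constant number of times.

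For (ii), let $\Gamma_i$ be the event that $\sdec^\star$ outputs the $i$-th pair correctly and $\Upsilon_i=\bigwedge_{j\le i}\Gamma_j$; we have $\Pr[\Upsilon_n]\ge\delta^2$, and since $\delta$ is non-negligible, $\ln(1/\delta^2)=O(\log\secp)$. Applying Lemma~\ref{lem:est} with parameter $1/2$, there exists a fixed index $i^\star\in[n]$ with $\Pr[\Gamma_{i^\star}\mid\Upsilon_{i^\star-1}]\ge 1-O(\log\secp/n)$, which we hard-wire into the construction. For (iii), on input a single masked pair $(B,R)$, the decoder $\sdec$ generates the remaining $n-1$ slots itself (using the efficient generator of $\ketpsi$ together with freshly sampled Paulis $(x_j,z_j)$ whose values it remembers), places $(B,R)$ at position $i^\star$, runs $\sdec^\star$, sets $F=0$ iff every self-generated slot with $j<i^\star$ was decoded to the correct $(x_j,z_j)$, and outputs the $i^\star$-th pair produced by $\sdec^\star$ as $P$ in that case. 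The self-generated slots are distributed identically to honest slots, so the conditional probabilities match the abstract events: $\Pr[F=0]\ge \Pr[\Upsilon_{i^\star-1}]\ge \Pr[\Upsilon_n]\ge\delta^2$ is non-negligible, and $\Pr[P=(x,z)\mid F=0]=\Pr[\Gamma_{i^\star}\mid\Upsilon_{i^\star-1}]\ge 1-O(\log\secp/n)$, which translates to $\epsilon=1-O(\log\secp/n)$ in the sense of Definition~\ref{def:supdecoder}. A final application of Lemma~\ref{lem:rdecsdec} converts this into a radiation decoder with the same parameters.

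The main subtlety I anticipate is in step (i): because the quantum auxiliary state $\ket{\psi_{\vec x,\vec z}}$ itself depends on the secret, the application of quantum Goldreich--Levin must invoke the ``quantum auxiliary input'' strengthening stated in the furthermore clause of Corollary~\ref{cor:qgl}, and one must verify that the predictor unitary $U_{(\vec a,\vec a')}$ and its inverse are efficiently implementable (which follows from purifying $\distn$). Everything else is a clean bookkeeping of conditional probabilities together with a direct invocation of the Estimation Lemma.
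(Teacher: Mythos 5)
Your proof is correct and follows essentially the same three-stage route as the paper: quantum Goldreich--Levin applied to the $c'$-guessing predictor to obtain a batch decoder for all $n$ slots, the Estimation Lemma to locate a slot whose conditional success probability is $1-O(\log\secp/n)$, and embedding the challenge into that slot while self-generating (and checking) the earlier ones, finished off by Lemma~\ref{lem:rdecsdec}. The only deviations are cosmetic: you lower-bound the batch success probability by $\delta^2$ via Jensen where the paper's Markov-style argument gives $(\delta/2)^3$ (both are $\poly(\delta)$ and equally sufficient), and you hard-wire the index $i^\star$ non-uniformly where the paper estimates all the $\alpha_i$ empirically via Chernoff and selects the best at run time --- both are admissible here since the decoder's complexity is measured by circuit size.
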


\begin{proof}
For the remainder of the proof, let $\ketpsi$ be an ensemble of states and let $\distn$ be an efficient distinguisher between the pair of distributions of Definition~\ref{def:efifrombhrd}, with non-negligible advantage $\delta$. We show that there exists an effective radiation decoder $\rdec$ against $\ketpsi$. We will do this by first presenting a superdense decoder $\sdec$ against $\ketpsi$ and then applying Lemma~\ref{lem:rdecsdec} to derive the final conclusion.

\paragraph{From Distinguishing to Batch-Decoding.} We start by showing that the existence of $\distn$ also implies the existence of an efficient \emph{batch-decoder} $\cB$ that succeeds with probability at least $\delta' = \poly(\delta)$ in the following experiment.
\begin{enumerate}
	\item Consider registers $H_i B_i R_i$, for $i \in [n]$, initialize each triple $H_i B_i R_i$ to $\ketpsi_{H_i B_i R_i}$.

	\item Sample uniformly random strings $\vec{x}, \vec{z} \in \binset^n$.
	
	\item For all $i$, apply $(X^{x_i}Z^{z_i})_{B_i}$.
	
	\item Apply $\cB[\{ B_i R_i \}_{i\in [n]} \to Q]$, where $Q$ is a $2n$-qubit register.
	
	\item Measure $Q$ in the computational basis to obtain values $\vec{x}', \vec{z}' \in \binset^n$.
	
	\item The experiment succeeds if $(\vec{x}', \vec{z}') = (\vec{x}, \vec{z})$.
\end{enumerate}

In other words, with probability $\delta'$, $\cB$ is able to recover all $x_i, z_i$ values for all $i$, given only access to the registers $B_i R_i$.

The existence of $\cB$ is implied from that of $\distn$ by Corollary~\ref{cor:qgl} (the quantum Goldreich-Levin theorem). To apply the corollary, we need to construct the required unitary $U$ from the distinguisher $\distn$, using a standard argument. 

Let $\delta_{\vec{x},\vec{z}}$ denote the advantage of $\distn$ over fixed values of $\vec{x},\vec{z}$. With probability at least $\delta/2$ over $\vec{x},\vec{z}$, it holds that $\abs{\delta_{\vec{x},\vec{z}}} \ge \delta/2$.
Consider the channel $\cD'$ that, given as input $\left(\{ B_i R_i \}_{i\in [n]}, (\vec{a}, \vec{a}')\right)$ (note that $c'$ is not given) samples a random value $t' \in \binset$, applies $\distn$ on $\left(\{ B_i R_i \}_{i\in [n]}, (\vec{a}, \vec{a}'), t'\right)$ to obtain a value $t$, and outputs $t' \oplus t$. It is not hard to verify that the probability that $t'\oplus t = c = (\vec{a} \cdot \vec{x}) \oplus (\vec{a}' \cdot \vec{z})$ is exactly equal to $1/2+\delta_{\vec{x},\vec{z}}/2$. The unitary $U$ is derived from the purification of the channel $\cD'$, in addition to acting as identity on the registers $H_i$ which are not provided to $\cD$. The state $\ketphi$ in the corollary is the pure state in $\{ H_i B_i R_i \}_{i\in [n]}$ conditioned on the values of $\vec{x}, \vec{z}$, which are constants for the purpose of the corollary.

It follows that with probability at least $\delta/2$ over $\vec{x},\vec{z}$, the correct values are recovered with probability at least $(\delta/2)^2$, which implies that $\cB$ succeeds in decoding $\vec{x},\vec{z}$ with probability at least $\delta' = (\delta/2)^3$.

Corollary~\ref{cor:qgl} can therefore be applied and the performance of $\cB$ follows.

\paragraph{From Batch-Decoding to Single-Instance Decoding.} We now use the batch decoder $\cB$ to obtain a (regular) effective superdense decoder $\sdec$ as follows.

We consider the following random variables, with reference to the experiment of $\cB$ above. We denote by $\Gamma_i$ the event where $(x'_i, z'_i) = (x_i, z_i)$, and note that all $\Gamma_i$ are defined over the same probability space. We can now invoke the estimation lemma (Lemma~\ref{lem:est}) to deduce that for at least half of the values $i \in [n]$ it holds that 
\begin{align}
	\alpha_i = \Pr\left[\Gamma_i \Big| \bigwedge_{j=1}^{i-1} \Gamma_i\right] > 1 - \frac{2\ln(1/\delta')}{n}~.
\end{align}
Since we know that $\delta'=\poly(\secp)$, it holds that $\frac{2\ln(1/\delta')}{n} = O(\log(\secp) / n)$.

We also recall that for all $i$ it holds that
\begin{align}
	\Pr\left[\bigwedge_{j=1}^{i-1} \Gamma_i\right] \ge \Pr\left[\bigwedge_{j=1}^{n} \Gamma_i\right] \ge \delta'~.
\end{align}

It therefore follows that in $\poly(1/\delta, n, \secp) = \poly(\secp)$ time, it is possible to estimate all values $\alpha_i$ to within additive error $O(\log \secp / n )$, with global estimation error $O(\log \secp / n )$, using the Chernoff bound. This is because the batch-decoding experiment can be ran efficiently (in polynomial time).

We therefore devise our superdense decoder $\sdec[BR \to PF]$ as follows.
\begin{enumerate}
	\item Using $\cB$, estimate the values of all $\alpha_i$ to within $O(\log \secp / n )$ with $O(\log \secp / n )$ total estimation error, as described above.
	\item Let $i^{*}$ be such that the estimated  $\alpha_{i^*}$ is the highest.
	\item For all $i \neq i^*$ generate $B_i R_i$ along with $x_i, z_i$ as in the batch-decoding experiment.
	\item Set $B_{i^*} R_{i^*} = BR$.
	\item Run $\cB[\{ B_i R_i \}_{i\in [n]} \to Q]$. Measure $Q$ in the computational basis to obtain $\vec{x}', \vec{z}' \in \binset^n$.
	\item If for all $i < i^*$ it holds that $(x'_i, z'_i) = (x_i, z_i)$, set $F$ to the classical value $0$ and set $P$ to the classical values $x'_i, z'_i$.
	\item Otherwise set $F$ to the classical value $1$ and set $P$ to an arbitrary value.
\end{enumerate}

It follows from the above argument that $\sdec$ is a $(\gamma, \epsilon)$ superdense decoder against $\ketpsi$ with $\gamma \ge \delta'$ and $\epsilon \ge (1 - O(\log \secp / n))$. %

Finally, applying Lemma~\ref{lem:rdecsdec}, we deduce that we can also obtain a $(\gamma, \epsilon)$-radiation decoder with the same parameters, which completes the proof of our lemma. %
\end{proof}

\appendix

\section{Proof of The Estimation Lemma}
\label{apx:proofs}

\begin{lemma}[Lemma ~\ref{lem:est}, Restated]
	Let $n \in \bbN$ and let $(\Gamma_1, \ldots, \Gamma_n)$ be an arbitrary set of events over some probability space. Define $\Upsilon_i = \bigwedge_{j=1}^i \Gamma_j$ and let $\Upsilon_0$ be the universal event. Denote $\epsilon = \Pr[\Upsilon_n]$. Finally denote $\alpha_i = \Pr[\Gamma_i | \Upsilon_{i-1}]$.
	Then for all $\delta \in (0,1)$
	\begin{align}
		\Pr_{i \in [n]}\left[\alpha_i  > 1- \frac{\ln(1/\epsilon)}{\delta n}\right] \ge 1- \delta~.
	\end{align}
\end{lemma}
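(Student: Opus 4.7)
My plan is to reduce the statement to Markov's inequality applied to the sequence $\ln(1/\alpha_i)$. The key observation is a telescoping identity: by the chain rule for conditional probability,
\begin{align}
\Pr[\Upsilon_i] \;=\; \Pr[\Gamma_i \mid \Upsilon_{i-1}] \cdot \Pr[\Upsilon_{i-1}] \;=\; \alpha_i \cdot \Pr[\Upsilon_{i-1}]~,
\end{align}
so iterating from $\Pr[\Upsilon_0] = 1$ down to $\Pr[\Upsilon_n] = \epsilon$ gives $\prod_{i=1}^n \alpha_i = \epsilon$. Taking logarithms yields $\sum_{i=1}^n \ln(1/\alpha_i) = \ln(1/\epsilon)$, i.e.\ the average of the non-negative quantities $\ln(1/\alpha_i)$ over $i \in [n]$ equals $\ln(1/\epsilon)/n$.

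Next I would apply Markov's inequality to this sum. For any threshold $t > 0$,
\begin{align}
\Pr_{i \in [n]}\bigl[\ln(1/\alpha_i) \ge t\bigr] \;\le\; \frac{\ln(1/\epsilon)}{t \cdot n}~.
\end{align}
Setting $t = \ln(1/\epsilon)/(\delta n)$ makes the right-hand side equal to $\delta$, so with probability at least $1-\delta$ over the choice of $i$, we have $\alpha_i > \exp(-\ln(1/\epsilon)/(\delta n))$.

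The final step is a routine conversion from exponential to linear form using the elementary inequality $e^{-x} \ge 1 - x$ valid for all real $x$. This immediately yields $\alpha_i > 1 - \ln(1/\epsilon)/(\delta n)$ on the same event, establishing the bound. I do not anticipate any serious obstacle here: the telescoping is immediate from the definitions, and the rest is a one-line invocation of Markov together with a standard inequality. If anything, one small point of care is that the strict inequality in the conclusion must be preserved; this comes for free because $\ln(1/\alpha_i)$ strictly less than the threshold gives $\alpha_i$ strictly greater than the exponential bound, which in turn is at least the linear bound.
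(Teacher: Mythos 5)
Your proposal is correct and follows essentially the same route as the paper's proof: the telescoping identity $\prod_{i=1}^n \alpha_i = \epsilon$, Markov's inequality applied to the non-negative quantities $\ln(1/\alpha_i)$ with threshold $\ln(1/\epsilon)/(\delta n)$, and the conversion $e^{-x} \ge 1-x$. No gaps.
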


\begin{proof}
	By definition it holds that
	\begin{align}
		\alpha_i = \frac{\Pr[\Gamma_i \land \Upsilon_{i-1}]}{\Pr[\Upsilon_{i-1}]} = \frac{\Pr[\Upsilon_{i}]}{\Pr[\Upsilon_{i-1}]}~.
	\end{align}
	
	Then by a telescopic product it holds that
	\begin{align}
		\prod_{i=1}^n \alpha_i = \Pr[\Upsilon_n] = \epsilon~,
	\end{align}
	or alternatively
	\begin{align}
		\frac{\ln(1/\epsilon)}{n} = \Ex_{i \in [n]}\left[ \ln(1/\alpha_i)\right]~.
	\end{align}
	We can apply Markov's inequality (since $\ln(1/\alpha_i)$ is positive) to conclude that for all $\delta \in (0,1)$:
	\begin{align}
		\Pr_i \left[\ln(1/\alpha_i) \ge \frac{\ln(1/\epsilon)}{\delta n} \right] \le \delta~.
	\end{align}
	
	Therefore with probability at least $1-\delta$ over $i$, we have that $\ln(1/\alpha_i) < \frac{\ln(1/\epsilon)}{\delta n}$, which implies that $1/\alpha_i < e^{\frac{\ln(1/\epsilon)}{\delta n}}$. That is
	\begin{align}
		\alpha_i > e^{-\frac{\ln(1/\epsilon)}{\delta n}} \ge 1 - \frac{\ln(1/\epsilon)}{\delta n}~.
	\end{align}
	This concludes the proof of the lemma.
\end{proof}

\ifnum\anon=0

\subsection*{Acknowledgements}

We thank Scott Aaronson, Ran Canetti, Isaac Kim and Luowen Qian for valuable feedback. We also thank anonymous reviewers for their comments.

\fi

%
%

\end{document}